\documentclass[a4paper,UKenglish,cleveref, autoref, thm-restate]{lipics-v2021}

\usepackage{epsfig}
\usepackage{epstopdf}
\usepackage{graphicx}
\DeclareGraphicsExtensions{.pdf,.png,.jpg,.eps}
\usepackage{latexsym}
\usepackage{amsfonts}
\usepackage{bm}
\usepackage{algorithm} 
\usepackage{algpseudocode}
\usepackage{multirow}
\usepackage[noadjust]{cite}
\usepackage{url}
\usepackage{pifont}
\usepackage{hyperref}
\newcommand{\T}{\mathcal{T}}

\usepackage{amsmath,amssymb,amstext,amsthm}
\usepackage{enumitem}
\usepackage{verbatim}

\newcommand\ceil[1]{\lceil #1 \rceil}
\newcommand\floor[1]{\lfloor #1 \rfloor}

\newcommand{\RMQ}{{\textsf{RMQ}}}
\newcommand{\access}{{\textsf{access}}}
\newcommand{\lca}{{\textsf{lca}}}
\newcommand{\depth}{{\textsf{depth}}}

\newcommand{\preorder}{{\textsf{preorder}}}
\newcommand{\inorder}{{\textsf{inorder}}}

\title{Succinct Representation of Search Trees on Trees} 
 \author{Seungbum Jo}{Chungnam National University, South Korea}{sbjo@cnu.ac.kr}{https://orcid.org/0000-0002-8644-3691}{This work was supported by the National Research Foundation of Korea(NRF) grant funded by the Korea government(MSIT) (RS-2025-23963814)}

\author{Nodari Sitchinava}{University of Hawaii at Manoa, United States}{nodari@hawaii.edu}{https://orcid.org/0000-0001-8876-4846}{Supported by NSF grant 2432018.}

\authorrunning{S. Jo and N. Sitchinava} 

\Copyright{Seungbum Jo and Nodari Sitchinava} 

\ccsdesc[500]{Theory of computation~Data structures design and analysis} 

\keywords{Search trees on trees, Encoding data structures, Binary search trees} 

\relatedversion{} 

\EventEditors{John Q. Open and Joan R. Access}
\EventNoEds{2}
\EventLongTitle{42nd Conference on Very Important Topics (CVIT 2016)}
\EventShortTitle{CVIT 2016}
\EventAcronym{CVIT}
\EventYear{2016}
\EventDate{December 24--27, 2016}
\EventLocation{Little Whinging, United Kingdom}
\EventLogo{}
\SeriesVolume{42}
\ArticleNo{23}
\nolinenumbers
\begin{document}

\maketitle
\begin{abstract}
A search tree on trees (STT) is a data structure for performing a search for a target vertex in a reference tree. A standard binary search tree is a special case of an STT, where the reference tree is a path of totally ordered elements.

In this paper, we study the problem of succinct representation of STTs.
We consider two cases: (1) general search trees on trees, and  
(2) Steiner-closed search trees on trees [Bose et al. TALG 2023].  
For both cases, we present representations that can be constructed in polynomial time  
and achieve optimal space up to the lower-order additive terms.  

We also present data structures for supporting fast traversals of both general and Steiner-closed STTs.
For general STTs our data structure still takes optimal space up to lower-order additive terms.
\end{abstract}

\newpage
\section{Introduction}
\emph{Search on trees} (and more generally \emph{search on graphs}) has been extensively studied under various names in computer science and discrete mathematics, such as elimination trees~\cite{elimination-trees-1,elimination-trees-2,elimination-trees-3,elimination-trees-4}, ordered-colorings~\cite{ordered-colourings}, edge rankings~\cite{d08}, vertex rankings~\cite{vertex-rankings-1,vertex-rankings-2,vertex-rankings-3}, and tubings~\cite{tubings-1,tubings-2}. 
It is a generalization of binary search on a totally ordered set, which can be viewed as a search on a graph that is a path. (See Section~\ref{sec:related} for related work on this topic.)

Recently, Bose et al.~\cite{DBLP:journals/talg/BoseCIKL23} took a data structuring approach to this problem and presented the \emph{search tree on tree (STT)} data structure, which is a generalization of the classical \emph{binary search tree (BST)}. 
Given an (unrooted) tree $T$, referred to as the \emph{reference tree}, the root of the STT $\mathcal{T}$ on $T$ corresponds to some vertex $x \in T$ and its subtrees are STTs on each of the connected components of $T \setminus x$. 
Therefore, BST $\T$ is a special case of STTs, where $\T$ is built on the reference tree $T$ that is a path -- the Hasse diagram of the totally ordered set $P$. 

STT data structures (and their algorithmic analogues for search on trees and graphs) have been of increasing interest recently (see Section~\ref{sec:related} for references). In this paper, we ask the next natural question in the study of STTs: how efficiently can we encode STTs?

\subsection{Related work}\label{sec:related}
Binary search on a totally ordered set is a classical algorithm.
BSTs were considered independently by several researchers in the 1950s~\cite[Sec. 6.2.2]{knuthart3sorting}, while the first balanced dynamic BST with worst-case $O(\log n)$ height -- the AVL tree -- was invented in 1962~\cite{avl-tree}.

The question of generalizing the idea of binary search from totally ordered sets to partially ordered sets (\emph{posets}) was initiated by Linial and Saks~\cite{ls85}. They showed that $\Theta(\log n)$ queries are necessary and sufficient to find a target element in the worst case. 
Their result can be viewed as a search for a target vertex in a directed acyclic graph (DAG) -- the Hasse diagram that represents the partial order relation between the elements of the poset. In this view, the search algorithm queries an oracle to determine which vertex of a directed edge $(u,v)$ is closer to the target vertex $t$ and proceeds in the corresponding subgraph (the subgraph induced by the set of vertices reachable from the vertex returned by the oracle). Hence, the result of Linial and Saks can be viewed as building a search tree on a DAG, with queries being performed on the edges. Alternatively, one can consider the search on DAGs where the queries are performed on the vertices. In this setting, a query on vertex $u$ returns the neighbor $v$ of $u$ that is closer to target $t$ than $u$. 

Linial and Saks~\cite{ls85} showed that, just like in BSTs, the minimum height of a search tree on DAGs is $\Theta(\log n)$ in the worst case. However, unlike BSTs, the problem of finding minimum-height search trees on DAGs is known to be NP-hard~\cite{cdkl04,d08}. Therefore, the bulk of existing work is focused on search trees on trees rather than DAGs. The minimum-height STTs can be computed in linear time~\cite{s89,ly01,mow08,op06}, and this restriction for search on trees rather than DAGs is sufficient for many practical applications, such as file system synchronization~\cite{bfn99,mow08}, parallel database queries~\cite{mui01}, software testing~\cite{bfn99,mow08}, asymmetric communication protocols~\cite{lm11},  assembly planning~\cite{irv88}, and mobile agent computing~\cite{bfkr21}. 

Search on trees and more generally search on graphs has been studied for decades under various names, such as elimination trees~\cite{elimination-trees-1,elimination-trees-2,elimination-trees-3,elimination-trees-4}, ordered-colorings~\cite{ordered-colourings},
and tubings~\cite{tubings-1,tubings-2} among others. 
The minimal height of a search tree on a graph $G$ is also known as the \emph{treedepth} of $G$~\cite[Ch. 6]{sparsity-book-12} and computing the search tree on trees  is equivalent to the problems of \emph{edge ranking}~\cite{d08} (when queries are on the edges) and \emph{vertex ranking}~\cite{vertex-rankings-1,vertex-rankings-2,vertex-rankings-3} (when queries are on the vertices). 
Because of different names in various research communities, several results have been rediscovered multiple times.  For example, linear time algorithms for vertex rankings was discovered by Sch\"affer~\cite{s89} and for edge rankings by Lam and Yue~\cite{ly01} and rediscovered in the context of STTs by Mozes et al.~\cite{mow08} and Onak and Parys~\cite{op06}, respectively.

The question of minimizing the height of a search tree (worst-case optimality) is rather limited. Alternatively, one might ask for a search tree that minimizes the total query time across a given distribution of queries (also known as \emph{optimal BST}~\cite{knuth-opt-bst71}). In case of classical BSTs, the shape of such optimal BST can be found in $O(n^2)$ time, using the classic dynamic programming algorithm by Knuth~\cite{knuth-opt-bst71}. 
On the other hand, finding the optimal search tree on graphs with edge queries is NP-hard even for trees of diameter at most \(4\)~\cite{cjlm11}, although a constant-factor approximation can be computed in linear time~\cite{lm11}.

\subparagraph*{Adaptive STTs.}
Recently, Bose et al.~\cite{DBLP:journals/talg/BoseCIKL23} considered the adaptive version of the STTs (with vertex queries). They proposed an extension of the BST rotation operation to the STT setting. They presented an analog of the Tango trees in the STT setting and showed that it is $O(\log\log n)$-competitive with an optimal STT on a distribution of queries that are not known a priori, i.e., that are produced online. Crucial to achieving this result is the concept of a restricted class of STTs, called \emph{Steiner-closed} STTs, which they showed can be constructed from general STTs with a factor of at most 2 increase in the depth of every node. 

Berendsohn and Kozma~\cite{splayTT} presented \emph{Splay tree on trees (SplayTT)} -- a generalization of splay trees~\cite{DBLP:journals/jcss/SleatorT83} to the STT setting. They showed that SplayTTs achieve static optimality, i.e., are constant-factor competitive with any optimal static STTs on the online sequence of queries.  
They also conjectured that SplayTTs are \emph{dynamically optimal}, i.e., are constant-factor competitive with any optimal \emph{dynamic} STT on the online sequence of queries, similar to how splay trees are conjectured to be dynamically optimal in the BST setting.

\subparagraph*{Succinct tree representations.}
Search trees are widely used as data structures that organize and enable efficient access to data, 
as seen in applications such as text indexing~\cite{navarro2007compressed} and information retrieval~\cite{muthukrishnan2002efficient}. 
One way to access indexed data fast is to store the data within the nodes of the search tree. 
However, as data grows, the overhead associated with copying and storing it within the nodes of the tree becomes significant. 
Instead, one may want to store the shape of the search tree and access the data indexed by the tree directly from the original storage medium. This motivates space-efficient representations of tree structures that support efficient navigation while using minimal additional space.

Considering only the tree structure and ignoring the labels, a search tree can be represented as an ordered tree. A naive approach to storing an ordered tree with $n$ nodes is to use $O(n)$ pointers, which requires $\Omega(n \log n)$ bits\footnote{We assume a word-RAM model with a word size of $\Theta(\log n)$ bits. All our logarithms are of base $2$.}.
However, the information-theoretic space lower bound for storing such trees is $2n-O(\log n)$ bits. 
Thus, research in this area mainly focuses on reducing the gap between the naive representation and the information-theoretic lower bound.
Since Jacobson's LOUDS representation~\cite{DBLP:conf/focs/Jacobson89}, various \textit{succinct} representations (i.e., representations of size $2n+o(n)$ bits) have been proposed for ordered and unlabeled trees, enabling efficient support for tree navigation queries (see \cite{navarro2016compact} for earlier results on succinct tree representations). 
One of the most recent succinct representations, proposed by Navarro and Sadakane~\cite{DBLP:journals/talg/NavarroS14}, supports a wide range of tree navigation queries in $O(1)$ time (see Table 1 in \cite{DBLP:journals/talg/NavarroS14}).

Binary search trees (BSTs) can be considered as ordered and labeled trees where each node is labeled with a key from the totally ordered set, and all keys in the left (resp. right) subtree are smaller (resp. larger) than the key at the node. The data structure of Navarro and Sadakane~\cite{DBLP:journals/talg/NavarroS14} provides a ($2n+o(n)$)-bit representation of a static BST, supporting traversal in $O(1)$ time per node, as the label of any node can be obtained from its inorder number, while the keys are stored in a separate array, not counted in the space of BST. 

\subsection{Our results}
In this paper, we study the problem of storing STTs using small space.
Specifically, given a reference tree $T$ with $n$ vertices, which is accessible during decoding, our goal is to obtain a space-efficient representation of the STT $\T$ on $T$. Since we assume that $T$ is accessible during decoding, we do not account for the space to store it.
We show that we can significantly improve on the naive $O(n \log n)$-bit encoding of $\T$:

\begin{enumerate}
	\item In Section~\ref{sec:general}, we present an $(\ceil{n \log \ell}+ 2n)$-bit representation for $\T$, where $\ell$ denotes the number of leaves in $T$. 
	Moreover, in \Cref{sec:lb}, we show that there exists $T$ that requires at least $n \log {\ell}-O(\ell \log {(n/\ell)})$ bits to encode an STT on $T$, i.e., our representation is optimal up to $O(n)$ additive terms in general, and up to the \emph{lower-order} additive terms when $\ell = \omega(1)$.
	
	\item In Section~\ref{sec:steiner} we show that if $\T$ is a Steiner-closed STT on $T$ (see Section~\ref{sec:prelim} for the formal definition), there exists a $(2n + \ceil{\log n})$-bit representation of $\T$. 
    Furthermore, \Cref{sec:lb} we prove that any Steiner-closed search tree on $T$ requires at least $2n - O(\log n)$ bits, i.e., our representation is optimal up to lower-order additive terms. In particular, the upper and lower bounds coincide with those for BSTs on arrays of size $n$, up to lower-order additive terms. Since every STT can be transformed into a Steiner-closed STT at a cost of only doubling the height~\cite{DBLP:journals/talg/BoseCIKL23}, i.e., without asymptotically affecting search time, the more efficient representation of Steiner-closed STTs is of particular interest.
\end{enumerate}

Both of the above representations can be constructed in polynomial time. However, they do not support traversal of $\T$ efficiently. 
Instead, in \Cref{sec:access}, we present additional representations of general and Steiner-closed STTs that support traversal from the current node to its $i$-th child, where the current node is identified by its preorder number, in the following time bounds:

\begin{enumerate}
	\item An $(n \log \ell + O(n) + o(n \log  \ell))$-bit representation that supports access to the $i$-th child in $O(\log \log \ell)$ time, and
	\item An $O(n)$-bit representation that supports access to the $i$-th child in $O(1)$ time when $\T$ is Steiner-closed.
\end{enumerate}

Note that, while the first of these two representations achieves optimal space up to lower order terms,  
the second one is optimal only asymptotically, because only $2n-O(\log n)$ bits are necessary to represent Steiner-closed STTs. 

The rest of the paper is organized as follows.
Section~\ref{sec:prelim} introduces the necessary notations.
Section~\ref{sec:summary} presents a high-level overview of the techniques used to obtain the results. 
Sections~\ref{sec:general} and~\ref{sec:steiner} present the succinct representations of general STTs and Steiner-closed STTs, respectively.

\section{Preliminaries}\label{sec:prelim}

Table~\ref{tab:symbols} summarizes the notations used throughout the paper.

\begin{table}[tp]
	\centering
	\begin{tabular}{l p{5cm} l p{5cm}}
		\hline
		\textbf{Symbol} & \textbf{Description} & \textbf{Symbol} & \textbf{Description} \\
		\hline
		\hline
		$T(v)$ & The subtree of $T$ rooted at $v$. 
		& $CH_G(S)$ & Convex hull of the vertex set $S$ in $G$. \\
		$G[S]$ & Induced subgraph of $G$ by the vertex set $S$. 
		& $\lca_T(u, v)$ & The lowest common ancestor (LCA) of $u$ and $v$ in the rooted tree $T$. \\
		$P_G(u,v)$ & The path from $u$ to $v$ in $G$. 
		& $\depth_T(v)$ & The depth of $v$ in the rooted tree $T$. \\
		$\preorder_T(v)$ & The preorder number of $v$ in the ordered tree $T$. 
		& $\inorder_T(v)$ & The inorder number of $v$ in the binary tree $T$. \\
		\hline
		\hline
	\end{tabular}
	\caption{Summary of the main notations used in this article}
	\label{tab:symbols}
\end{table}

\subparagraph*{Search trees on trees.} 
Given a labeled tree $T = (V, E)$ with $n$ vertices, a \textit{search tree $\T$ on $T$} is defined as a rooted tree on $T$ satisfying the following conditions:

\begin{itemize}
	\item The root node $r$ of $\T$ corresponds to an arbitrary vertex of $T$.
	\item The rooted subtrees of $\T \setminus r$ are search trees on the connected components of $T \setminus r$.
\end{itemize}

We assume the labels of $T$ are totally ordered and unique, and given the label of a node $u$, we are able to access $u$ and all relevant information stored at $u$ in $O(1)$ time. For example, the label can be the memory address of where $u$ is stored. 

By definition, $T$ and its search tree $\T$ share the same set of nodes $V$. 
In the rest of the paper, to avoid ambiguity, we use the term \textit{vertex} for elements of $T$ and \textit{node} for elements of $\T$.
Furthermore, we refer to the vertex (resp., node) $u$ as the vertex (resp., node) with label $u$.
In general, $T$ is an unrooted tree, and $\T$ is defined as an unordered rooted tree.
In this paper, however, we assume that $T$ is a rooted and ordered tree (the root of $T$ and the ordering of 
siblings are defined in later sections). 
Moreover, we treat $\T$ as an ordered rooted tree by defining, for any non-leaf node $d$, the sibling order according to the \textit{smallest} preorder numbers among the vertices in the corresponding connected components of $T \setminus \{d\}$.\footnote{In general, the preorder traversal of an ordered tree visits the root $r$ first and then recursively performs preorder traversal on the subtrees rooted at the children of $r$ from left to right.} For example, in $T$ and $\T$ shown in Figure~\ref{figure:tit}(a) and (b), respectively, the node $i$ is the leftmost sibling among the children of $d$, because the connected component of $T$ that corresponds to the subtree of $\T$ rooted at $i$ contains vertex $a$, which has the smallest preorder number among all vertices in $T$.

\begin{figure}[t]
	\begin{center}
		\includegraphics[scale=0.7]{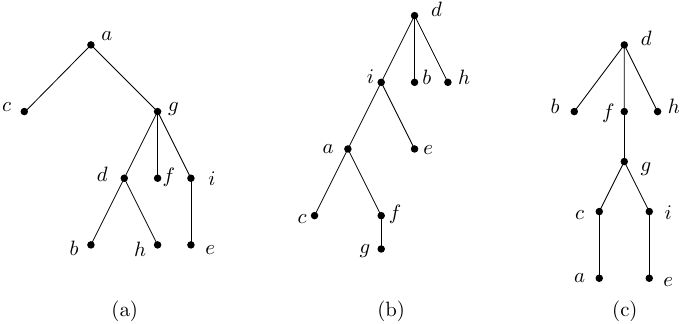}
	\end{center}
	\caption{(a) Reference tree $T$, (b) search tree $\T$ on $T$, and (c) Steiner-closed search tree $\T$ on $T$.}
	\label{figure:tit}
\end{figure}

As a special case of a search tree on $T$, Bose et al.~\cite{DBLP:journals/talg/BoseCIKL23} introduced the notion of a \textit{Steiner-closed search tree} on $T$, defined as follows.
Given a subset of vertices $S \subseteq V$ of a graph $G$, the \textit{convex hull} $CH_G(S)$ is the set of all vertices 
lying on the paths between any pair of vertices $u,v \in S$ in $G$. 
The set $S$ is called \textit{Steiner-closed with respect to $G$} if and only if every vertex in $CH_G(S) \setminus S$ has degree exactly two in the subgraph of $G$ induced by $CH_G(S)$.
A search tree $\T$ on $T$ is Steiner-closed if and only if, for every node $v$, the set of nodes along the path from the root to $v$ in $\T$ forms a Steiner-closed set with respect to $T$. 
For example, the tree in Figure~\ref{figure:tit}(b) is a search tree on the tree shown in Figure~\ref{figure:tit}(a),  
but it is \textit{not} a Steiner-closed STT, because the set of nodes on the path from $d$ to $a$ in $\T$, i.e., $S = \{a, d, i\}$, 
does not form a Steiner-closed set with respect to $T$, as it omits the node $g$, which has degree $3$ in the subgraph of $T$ induced by $CH_T(S) = \{a, d, g, i\}$.

\subparagraph*{Supporting traversal of $\T$.} 
To support the traversal of $\T$, we identify the current node by its preorder number $k$ in $\T$. The search starts at the root, for which $k=1$. Given $k$, our $\access(k)$ operation returns the label $u$ of the corresponding node. The oracle $\mathcal{O}$, given the search key and $u$, returns the index $i$ of the child from which the search should proceed.
For example, when $T$ is a path and $\T$ is a BST, the oracle query is the comparison query that returns the left child ($i=0$) or the right child ($i=1$) of $u$ from which the search should proceed.
Using the constant-time navigation operations of Navarro and Sadakane~\cite{DBLP:journals/talg/NavarroS14}, we then obtain the preorder number $k'$ of the $i$-th child of the current node. 
We continue the search with $k'$ as the preorder number of the current node. 
Thus, the remaining problem is to support $\access(k)$, which maps the preorder number of a node in $\T$ to its label.

\section{Technical Overview}\label{sec:summary}
In this section we present a high-level overview of our techniques.

\subparagraph*{Succinct representations of general STTs.}
Given a reference tree $T$ with $\ell$ leaves (recall that we treat both $T$ and $\T$ as rooted and ordered trees), we start by decomposing $T$ into $\ell$ ancestor-to-leaf disjoint paths $P_1, \dots, P_{\ell}$.
This can be achieved via a preorder traversal of $T$.

\begin{figure}[t]
	\begin{center}
		\includegraphics[scale=0.7]{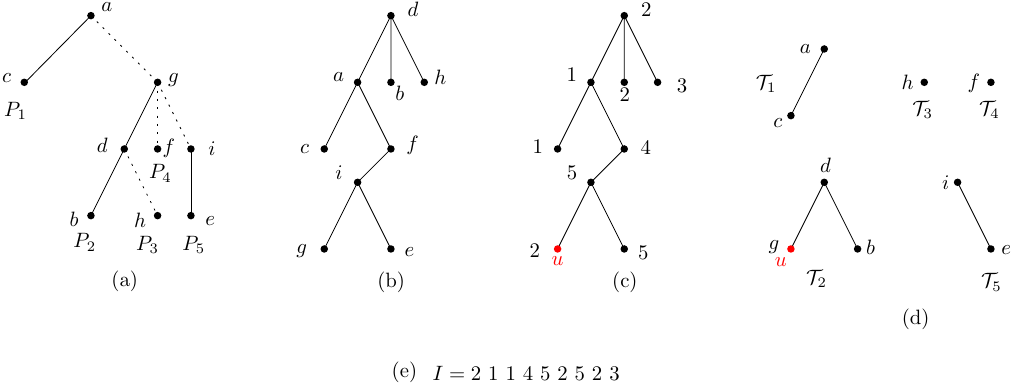}
	\end{center}
	\caption{(a) A tree $T$ with paths $P_1, \dots, P_5$ (dotted lines indicate edges that are not part of any path in $P$), (b) a search tree $\T$ on $T$, (c) $\T'$, (d) the tree structures of $\T_1, \dots, \T_5$, (e) the array $I$, which stores the labels of $\T'$ in preorder (this is used in the representation of \Cref{thm:general_upper}).
		}
	\label{figure:general}
\end{figure}

Based on these paths, our representation of $\T$ is a labeled tree $\T'$ that has the same structure as the STT 
$\T$, where each node of $\T'$ is labeled by the index of the path that contains its corresponding vertex in $T$.
Since there are at most $\ell$ distinct path labels, $\T'$ can be represented using $\ceil{n \log \ell}+2n$ bits. 
Moreover, since $P_1, \dots, P_{\ell}$ are ancestor-to-leaf paths, we can construct $\T_1, \dots, \T_{\ell}$ (for each $i \in [\ell]$, $\T_i$ is constructed from the nodes of $\T$ whose corresponding vertices lie in $P_i$) directly from $T$ and $\T'$ without storing them explicitly. 
Hence, the size of the representation is $\ceil{n \log \ell}+2n$ bits in total. 

Using this representation, we can fully reconstruct $\T$: $\T'$ provides the tree structure of $\T$, and from the label of each node in $\T'$, we can identify its corresponding node in the binary trees $\T_1, \dots, \T_{\ell}$, whose labels can in turn be computed from $T$.
For details of the representation and the reconstruction procedure, see Section~\ref{sec:general}.

Finally, we prove the space lower bound for storing $\T$ by showing that a \textit{comb-shaped tree} (see Figure~\ref{figure:general_lower}) requires $n \log \ell - O(\ell \log (n/\ell))$ bits to be represented by counting the number of possible STTs on such a tree. 
Details are presented in \Cref{sec:lb}.

\begin{figure}[t]
	\begin{center}
		\includegraphics[scale=0.5]{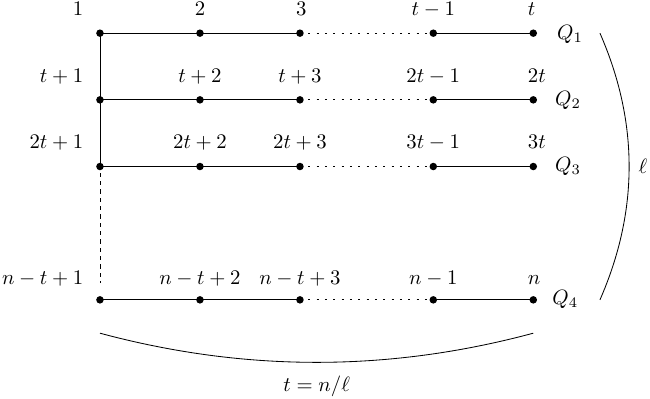}
	\end{center}
	\caption{A comb-shaped tree used in proving the space lower bound for general STTs.}
	\label{figure:general_lower}
\end{figure}

\subparagraph*{Succinct representations of Steiner-closed STTs.}
When $\T$ is a Steiner-closed STT, the Steiner-closed property of $\T$ allows us to find a specific decomposition of $T$ into ancestor-to-leaf disjoint paths $P_1, \dots, P_{\ell}$, such that  each set of vertices $V_i$ that forms path $P_i$ also forms a connected component $\T[V_i]$ in $\T$ (such decomposition might not exist in general STTs). 
The fact that $\T[V_i]$ is a connected component in $\T$ allows us to reconstruct $\T$ only from the representations of $\T_1,\dots, \T_\ell$ (the representations of the paths $P_i$) without the need for $\T'$. 
Instead it is sufficient to maintain a data structure that can answer \textit{Range Minimum Queries (RMQ)} on a sequence of length $\ell$ that captures the information about how the paths $P_1, \dots, P_{\ell}$ are formed from $T$.
Because the paths $P_1, \dots, P_{\ell}$ and their corresponding binary search trees $\T_1,\dots, \T_\ell$ form connected subgraphs in both $T$ and $\T'$, respectively, the answers to the RMQ queries provide a guide for merging $\T_1,\dots, \T_\ell$ into a single tree $\T$. 

We store the tree structures of $\T_1,\dots, \T_\ell$ using only $2n - 2\ell$ bits in total, while the RMQ data structure requires only $2\ell$ bits~\cite{bender2000lca,DBLP:journals/cacm/Vuillemin80}.
As a result, we obtain a $(2n + \ceil{\log n})$-bit encoding of $\T$ (where the additional $\log n$ bits are used to store the root of $\T$). 
For details of the representation and the reconstruction procedure, see Section~\ref{sec:steiner}.

Finally, in \Cref{sec:lb}, we show that this is tight up to lower-order terms by observing that every STT on a path is Steiner-closed, and that STTs on a path of size $n$ are in one-to-one correspondence with BSTs on an ordered set of the same size.

\subparagraph*{Supporting efficient traversals via $\access(k)$ queries.}
Consider the following procedure for performing the search for some key $\kappa$  on STT $\T$. 
We identify the current node by its preorder number $k$ in $\T$, starting from the root. 
Given $k$, $\access(k)$ returns the label $u$ of the current node. 
The query oracle takes $u$ together with the target key $\kappa$ and determines the index $i$ of the child of 
$u$ in $\T$ from which the search should proceed. 
Thus, we would like to support efficient traversal from the current node to its \(i\)-th child, where the current node is identified by its preorder number.

The succinct representation of Navarro and Sadakane~\cite{DBLP:journals/talg/NavarroS14} supports a number of operations on static trees in $O(1)$ time, including the operation of computing the preorder number $k'$ of the $i$-th child of the node with preorder number $k$. 
Therefore, after each traversal step, it remains to obtain the label of the node whose preorder number in $\T$ is $k'$. We call such a query $\access(k)$.

Clearly, storing the mapping from each preorder number to the corresponding label naively would require $\Omega(n\log n)$ space. Instead, we describe space-efficient data structures that support answering  $\access(k)$ queries on general and Steiner-closed STTs.
When $\T$ is a general STT, the data structure is relatively simple: it stores a succinct representation of $\T$ together with $O(n)$-bit auxiliary structures that support efficient navigation on the trees in the representation (namely, $\T'$ and $\T_1, \dots, \T_{\ell}$).

However, there are additional challenges in obtaining an $O(n)$-bit data structure that supports $\access{}$ efficiently for a Steiner-closed STT $\T$.
The main challenge lies in computing, for a query node, the index of the path that contains its corresponding vertex efficiently. In the general STT case, this index can be obtained directly from $\T'$, but $\T'$ cannot be stored using only $O(n)$ bits.

Instead, we introduce the \textit{extended search tree} $\T^e$ of $\T$, which is constructed from $\T$ by adding at most $n$ \textit{copied nodes}.
In addition to all core properties of $\T$, 
$\T^e$ has one extra property that enables us to compute the path indices efficiently: 
the relative order of (the first) occurrences of paths $P_1, \dots, P_\ell$ in the preorder traversal of $T$ is the same as the relative order of (the first) occurrences of paths $P_1, \dots, P_\ell$ in the preorder traversal of $\T^e$. 
This property does not hold in Steiner-closed STTs in general.

After determining the path index $j$ containing the corresponding vertex of the query node, we compute the two endpoint vertices of $P_j$ in $O(1)$ time using $O(n)$ auxiliary bit sequences.
Finally, we compute the label of the query node in $O(1)$ time by
(1) locating the query node within the tree $\T_j$ using $\T^e$, and
(2) computing the label of the node in $\T_j$ from $P_j$.
Details of the data structures are presented in \Cref{sec:access}.

\section{Succinct Representation of General Search Trees on Trees}\label{sec:general}
Consider a tree $T = (V, E)$ with $n$ vertices and $\ell$ leaves.
In this section we treat $T$ as a rooted and ordered tree by assuming that (1) the root $r$ of $T$ is the vertex with the smallest label, and (2) for each non-leaf vertex $u \in T$, the children of $u$ are ordered from left to right according to their labels.
Since the labels are totally ordered, we can compute the preorder number of any vertex in $T$ without using extra space.

The main idea of the representation of the STT $\T$ on $T$ is to first decompose $T$ into $\ell$ distinct paths, 
where each path connects a leaf vertex to one of its ancestors in $T$.
For each path, we construct an STT on that path from the structure of $\T$. 
Since each such search tree can be viewed as a standard BST on an array (see Lemma~\ref{lem:path}), 
we exploit key structural properties of $\T$ from Lemma~\ref{lem:tit} to design a representation.
In particular, we encode $\T$ by storing the tree structure of $\T$ (together with the index of the path containing each node) and the search trees on the paths.
This representation uses $\ceil{n \log \ell}+ 2n$ bits in total, 
and 
Theorem~\ref{thm:general_lower} shows that this representation is optimal up to an additive $O(n)$ term.

We now proceed to describe the representation in detail.
We first prove a lemma showing that if $T$ is a path, then any search tree on $T$ can be treated as a standard BST.

\begin{lemma}\label{lem:path}
	There exists a $(2n-2)$-bit representation of any search tree $\T$ on a path $T$.
\end{lemma}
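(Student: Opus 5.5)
Let the path be $T = v_1 v_2 \cdots v_n$, where $v_1$ and $v_n$ are the two endpoints. Since $T$ is available at decoding time, we fix this ordering of its vertices canonically, for example by taking $v_1$ to be the endpoint with the smaller label. The plan is to show that $\T$ is a binary search tree on the sequence $v_1,\dots,v_n$ in which the left and right children are distinguished. We then encode the shape of this binary tree with a standard bit sequence and recover the labels from inorder ranks.

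\emph{Step 1: structure.} We argue by induction on $n$. The root $r=v_k$ of $\T$ splits the path $T$ into at most two components, $v_1\cdots v_{k-1}$ and $v_{k+1}\cdots v_n$. Each component is again a path, and each is searched recursively. We designate the subtree on the component lying on the $v_1$ side as the left child and the other as the right child. With this convention, every node of $\T$ has at most two children, and an inorder traversal visits $v_1,\dots,v_n$ in order. So $\T$ is exactly a BST on keys $1,\dots,n$, where $v_k$ has key $k$. Note that the paper's sibling order, by smallest preorder number, need not agree with left/right in this sense. We therefore record left/right explicitly in the encoding rather than relying on the ordered-tree structure.

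\emph{Step 2: encoding.} A binary tree with $n$ nodes can be encoded in $2n-2$ bits. Traverse the tree in preorder and, for each node, write two bits indicating whether its left child and its right child exist. This yields $2n$ bits. There are exactly $n-1$ ones, so the last two bits written are always $00$: the last node visited in preorder is a leaf. Those two bits can be dropped, leaving $2n-2$ bits. Decoding rebuilds the shape by the same preorder recursion. The tree size $n$ is known from $T$, so no length field is needed. The decoder then computes inorder numbers and assigns label $v_{\inorder(x)}$ to each node $x$, which recovers $\T$ together with its labels.

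\emph{Main obstacle.} The only delicate points are the canonical orientation of $T$, which makes left and right well defined without storing extra bits, and the savings of two bits. I expect both to be routine. For the savings, the key fact is that the preorder-last node of any binary tree is a leaf: it has no children, otherwise some child would follow it in preorder.
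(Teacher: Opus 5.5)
Your proof is correct and takes essentially the paper's route: orient the path from a canonical endpoint, observe that $\T$ is then a BST whose inorder sequence is the path order (a lone child is declared left or right according to which side of the path it covers), and encode the binary shape in $2n-2$ bits by discarding forced bits. The only difference is the shape code: you use a preorder child-existence bitmap and drop the final $00$ of the preorder-last leaf, whereas the paper converts the binary tree via first-child--next-sibling into an ordered tree, writes its $(2n+2)$-bit BP, and drops the forced two leading opening and two trailing closing parentheses, a form it reuses later because its navigation and inorder-rank structures operate on the BP.
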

\begin{proof}
For $n=1$, $T$ and $\mathcal T$ are identical, so there is no need to maintain $\mathcal T$ separately.
For $n>1$, root $T$ at an endpoint and order its vertices by depth; then any STT $\mathcal T$ on $T$ is a BST whose inorder order coincides with the preorder order of $T$; a single child is treated as the left or right child according to this order.
Applying the first-child--next-sibling representation and BP (balanced parentheses sequence)~\cite{munro2001succinct} encoding gives $2n+2$ bits, and omitting the first two opening and last two closing parentheses yields $2n-2$ bits.
\end{proof}

Next, we describe how to decompose $T$ into $\ell$ disjoint paths $P_1, \dots, P_{\ell}$.
For $i \in [\ell]$, let $l_i$ be the $i$-th leaf vertex of $T$ (leaves are ordered by their preorder numbers in $T$).
First, define $P_1 = P_T(r, l_1)$, and for $i \in\{2, \dots, \ell\}$, define
$P_i$ as the induced subgraph of $T$ consisting of the vertices in $P_T(r, l_i) \setminus \left(\cup_{j=1}^{i-1} P_j\right)$.
By construction, the sets of vertices on the paths $P_1, \dots, P_{\ell}$ form a partition of $V$.
The following lemma characterizes how the vertices in each path appear in $\T$.

\begin{restatable}{lemma}{propertiesgeneral}\label{lem:tit}
	For any vertices $u, v \in P_i$, the following statements hold: 
	\begin{enumerate}
		\item $\T(u)$ has at most two child subtrees that contain at least one node in $P_i$.
		\item Let $\T(u_l)$ and $\T(u_r)$ be the two non-empty child subtrees of $\T(u)$ 
		each containing at least one node in $P_i$.  
		If $u_l$ is a left sibling of $u_r$, then $\T(u_l)$ (resp., $\T(u_r)$) contains all nodes $u' \in P_i \cap \T(u)$ with $\depth_T(u') < \depth_T(u)$ (resp., $\depth_T(u') > \depth_T(u)$).
		\item $\lca_{\T}( u, v) \in P_i$.
	\end{enumerate} 
\end{restatable}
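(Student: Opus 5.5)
The plan rests on one structural fact about STTs. For every node $u$ of $\T$, the vertex set of $\T(u)$ induces a connected subtree $C_u$ of $T$. The child subtrees of $\T(u)$ are exactly the STTs on the connected components of $C_u \setminus \{u\}$. This follows by induction directly from the recursive definition. Moreover, $P_i$ is a vertical path in $T$: a contiguous segment of $P_T(r,l_i)$. Consequently, $P_i \cap C_u$ is a connected subpath of $P_i$, because the intersection of two subtrees of a tree is connected.

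\textbf{Statements 1 and 2.} Fix $u \in P_i$. Then $P_i \cap C_u$ is a subpath containing $u$. Removing $u$ splits it into at most two pieces: $A$, the vertices of $P_i \cap C_u$ with $\depth_T < \depth_T(u)$, and $B$, those with $\depth_T > \depth_T(u)$. Each nonempty piece is connected in $T - u$ and so lies in a single component of $C_u \setminus \{u\}$. The pieces $A$ and $B$ lie in different components, since the only path in $T$ between them passes through $u$. Hence at most two child subtrees of $\T(u)$ meet $P_i$, which proves statement 1.

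For statement 2, I will show that the child subtree containing $A$ is the left one. Recall that children are ordered by the smallest preorder number in their component. Let $X$ be the component containing $A$ and $Y$ the one containing $B$. The parent $p$ of $u$ in $T$ lies in $A \subseteq X$. Every vertex of $Y$ lies in the $T$-subtree rooted at the child of $u$ on $P_i$, so every vertex of $Y$ is a descendant of $u$ in $T$. The vertex $p$ is a proper ancestor of $u$, so its preorder number is smaller than that of every vertex in $Y$. Therefore $\min \preorder_T(X) < \min \preorder_T(Y)$, and $X$ is the left sibling.

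\textbf{Statement 3.} Let $w = \lca_{\T}(u,v)$. If $w \in \{u,v\}$ we are done. Otherwise $u$ and $v$ lie in distinct child subtrees of $\T(w)$, so they are in different components of $C_w \setminus \{w\}$. But $u$ and $v$ both lie in $C_w$, and $C_w$ is connected, so the $T$-path $P_T(u,v)$ lies inside $C_w$. Since removing $w$ separates $u$ from $v$, $w$ must lie on $P_T(u,v)$. That path is contained in $P_i$, because $P_i$ is a path. Hence $w \in P_i$.

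The main obstacle is the careful justification that $P_i$ is a vertical path. It is contiguous because $P_T(r,l_i)$ minus the earlier paths is a suffix of $P_T(r,l_i)$: earlier paths are unions of root-to-leaf prefixes, which are ancestor-closed along this path. The other delicate point is the preorder argument for sibling order in statement 2. There the case $A = \emptyset$ or $B = \emptyset$ is vacuous, since statement 2 assumes both subtrees exist.
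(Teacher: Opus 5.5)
Your proof is correct and follows essentially the same route as the paper's. For statements 1 and 2 you split $P_i$ around $u$ into an ancestor side and a descendant side and order them by preorder; for statement 3 you note that a separating $\lca_{\T}(u,v)$ must lie on $P_T(u,v)\subseteq P_i$. You are more careful than the paper at two points: it argues with components of $T\setminus\{x\}$ rather than of the vertex set of $\T(u)$ minus $u$, and it leaves implicit both that $P_i$ is vertical and the witness (your parent $p$) showing that the ancestor-side component has the smaller minimum preorder number.
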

\begin{proof}
	By the construction of $P_i$, the vertices in $P_i$ have distinct depths in $T$, which directly implies statement $1$ from the definition of a search tree on $T$. 
	Statement $2$ also follows from the definition of the ordering among siblings in $\T$.
	For any vertex $x \in P_i$, there are at most two connected components in $T \setminus \{x\}$ that contain at least one vertex in $P_i$: $C^u$, the connected component containing all proper ancestors of $x$, and $C^d$, the connected component containing all proper descendants of $x$. Moreover, the smallest preorder number in $C^d$ is greater than the preorder number of $x$ in $T$. Thus, the subtree of $\T$ corresponding to $C^u$ must appear as the left subtree of the subtree of $\T$ corresponding to $C^d$.
	To prove statement $3$, suppose the node $z = \lca_{\T}( u, v)$ lies in $P_j$ with $i \neq j$.  
	Since $z \neq u$ and $z \neq v$, 
	the vertices $u$ and $v$ must belong to distinct connected components of $T \setminus \{z\}$, implying  $z \in P_j \cap P_T(u,v)$.  
	This contradicts the fact that $P_i$ and $P_j$ are disjoint, as $P_T( u, v)$ is a subgraph of  $P_i$ from the construction.
\end{proof}

Note that the paths $P_1, \dots, P_\ell$ can be identified by a single preorder traversal of $T$ (i.e., a lexicographical DFS from the vertex $r$): during the traversal, whenever we visit an adjacent vertex of a vertex $v$ other than the first one, we create a new path.
Next, for each $i \in [\ell]$, we define $\T_i$ as follows (see Figure~\ref{figure:general}(d) for an example):  
(1) for any leaf node $l$, pick the edge $e = (l,u)$ in $\T$ where $u$ is the parent of $l$;  
(2) if both $l$ and $u$ are in $P_i$, keep the edge; otherwise, contract the edge and assign the label 
of the merged node to either $l$ or $u$ (whichever belongs to $P_i$; if neither does, use $l$);  
(3) repeat steps (1) and (2) recursively until all nodes are in $P_i$.
Then, by Lemma~\ref{lem:tit}, $\T_i$ is a binary search tree on $P_i$, and given  the tree structures of $\T_1, \dots, \T_{\ell}$, their labels can be decoded by referring to $P_i$ using Lemma~\ref{lem:path}.  
As in the proof of Lemma~\ref{lem:path}, if a node in $\T_i$ has a single child, we treat that child as either the left or right child of its parent, depending on their preorder numbers in $T$. 

The representation of $\T$ is a single tree $\T'$, obtained from $\T$ by replacing each node’s label with the index of the path it belongs to. 
We store the tree structure of $\T'$ using BP, 
requiring $2n$ bits, and an array $I$ that stores the labels of $\T'$ in preorder.  
Using the representation of Dodis et al.~\cite{DBLP:conf/stoc/DodisPT10}, $I$ takes $\ceil{n \log \ell}$ bits, so the total space for (2) is $\ceil{n \log \ell} + 2n$ bits.  

To decode the label of a node $u \in \T$ from the representation,  
we first identify the path $P_i$ that contains $u$ using $\T'$.  
Then, we reconstruct $P_i$ from $T$ and $\T_i$ from $\T'$.  
The label of $u$ is obtained from $P_i$ and $\T_i$ using Lemma~\ref{lem:path}.

Finally, we analyze the construction time of the representation. We first construct a perfect hash table~\cite{fredman1984storing} that stores the mapping from the labels of the nodes in $\T$ to their position in $\T$. This can be done in $O(n)$ time by traversing $\T$ once. Next, as described earlier, we identify the paths $P_1, \dots, P_{\ell}$ in $O(n)$ time by a single preorder traversal of $T$. During this traversal, we can also construct $\T'$ by locating, for each visited vertex of $T$, its corresponding node in $\T$ in $O(1)$ time using the hash table.
Moreover, a BP of any tree of size $s$ can be built in $O(s)$ time, which implies that the bit string containing the BP of $\T'$ can be constructed in $O(n)$ time.
Finally, the array $I$ can be constructed in $O(n)$ time after constructing $\T'$ via the preorder traversal. We summarize the result in the following theorem.

\begin{theorem}\label{thm:general_upper}
	Given a tree $T$ with $n$ vertices and $\ell$ leaves, any search tree $\T$ on $T$ can be encoded using at most $\ceil{n \log \ell} + 2n$ bits. 
        The representation can be constructed in $O(n)$ time.
\end{theorem}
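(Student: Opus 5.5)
The encoding itself is the one described just before the statement: the BP sequence of the tree structure of $\T'$ ($2n$ bits) together with the array $I$ of path indices in preorder, packed with the scheme of Dodis et al.~\cite{DBLP:conf/stoc/DodisPT10} into $\ceil{n\log \ell}$ bits. This gives the space bound immediately. The substance of the proof is that these bits determine $\T$ uniquely, together with the construction-time bound.

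\textbf{Correctness of decoding.} I would argue in order:
\begin{itemize}
\item Since $T$ is available, the decoder recomputes the rooting at the smallest label and the children order by labels. One preorder traversal then yields the paths $P_1,\dots,P_\ell$ and the depth order of the vertices on each path.
\item From the BP sequence and $I$, the decoder knows the shape of $\T'$ and, for each node, the index $i$ of its path.
\item For each $i$, it applies the contraction procedure defining $\T_i$. This procedure uses only the shape of $\T$ and the path membership of nodes, so it can be run on $\T'$. By Lemma~\ref{lem:tit} (statements~1 and~3), $\T_i$ is a binary tree on $P_i$, and each node of $\T'$ with label $i$ corresponds bijectively to a node of $\T_i$, namely the preorder-corresponding node.
\item Statement~2 of Lemma~\ref{lem:tit} tells which child lies toward the ancestors of $P_i$ and which toward the descendants. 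This fixes the left/right orientation of each child in $\T_i$, including when a node has a single child: the sibling order in $\T'$ together with the positions of the $P_i$-labelled nodes determines it.
\item By Lemma~\ref{lem:path}, the inorder of $\T_i$ coincides with the depth order along $P_i$. Hence every node of $\T_i$, and therefore every node of $\T'$ labelled $i$, receives its actual vertex label.
\end{itemize}
Doing this for all $i$ labels every node of the known shape, so $\T$ is recovered.

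\textbf{Main obstacle.} The delicate point is the single-child case. I must show that the side of a lone $P_i$-child in $\T_i$ is recoverable from $\T'$. I would argue it from statement~2: the child subtree of $u$ that contains the $P_i$-nodes is either the component containing the ancestors of $u$ or the one containing its descendants. Which one it is can be read off the labels, by comparing depths once the labels of the ancestors in $\T_i$ are fixed top-down. Alternatively, it can be read from the sibling position relative to the component containing the smallest preorder vertex. If this becomes messy, I would fall back on the fact that the BST on a path with a given root-to-node structure is determined by the inorder rank computed recursively from subtree sizes. Since ancestors in $\T_i$ are decoded first, the interval of depths available to a subtree is known, and a lone child's side is forced by whether that interval lies above or below $u$.

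\textbf{Construction time.} A perfect hash table~\cite{fredman1984storing} maps labels to positions in $\T$ in $O(n)$ time. One preorder traversal of $T$ computes the paths and assigns each node of $\T$ its path index in $O(1)$ per vertex. A traversal of $\T$ then writes the BP sequence and $I$ in $O(n)$ time. Finally, I would need to confirm that the Dodis et al.\ encoding can be built in linear time, which I take as given.
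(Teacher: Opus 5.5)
\textbf{There is a genuine gap, and it sits at the step you flagged as the main obstacle.} Your encoding and decoding outline coincide with the paper's. But the claim you need there, that the side of a lone $P_i$-child in $\T_i$ can be recovered from $\T'$ and $I$, is false, so none of your three routes can work.

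Take $T$ with edges $\{1,2\}$, $\{1,3\}$, $\{3,4\}$, rooted at $1$. Then $\ell=2$, $P_1=\{1,2\}$ and $P_2=\{3,4\}$. Consider two STTs:
\begin{itemize}
\item root $1$, with children $2$ and a subtree on $\{3,4\}$ that is root $3$ with child $4$;
\item the same, except the subtree on $\{3,4\}$ is root $4$ with child $3$.
\end{itemize}
Both give the same $\T'$: a root with a leaf child and a two-node chain. Both also give the same $I=(1,1,2,2)$. So the bits do not determine $\T$.

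Your three arguments each fail on this example:
\begin{itemize}
\item The top-down interval argument is circular. The depth interval of the root of $\T_2$ is $\{3,4\}$, and a node with a lone child can sit at either end of its interval. Knowing the interval fixes nothing until you know the side.
\item The sibling-position test only helps when the lone $P_i$-child is not the leftmost child of $u$ in $\T$; then it must lie on the descendant side. When it is leftmost, it may be the ancestor-side component, or a descendant-side component whose region omits the $T$-parent of $u$. Here it is the only child.
\end{itemize}
The paper's own text does not close this either. It orients a lone child ``depending on their preorder numbers in $T$'', but those preorder numbers are exactly what is being decoded.

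\textbf{How to fix it.} The bound survives if you change what is stored.
\begin{itemize}
\item An STT is determined by its sequence of labels in preorder. Once the label of a node $x$ is known, its child subtrees are the components of its region minus $x$, ordered by smallest preorder in $T$. Each occupies as many consecutive preorder positions as its size. Hence the BP of $\T'$ is redundant.
\item Store instead $I$ ($\ceil{n\log\ell}$ bits) together with the \emph{oriented} binary shapes of $\T_1,\dots,\T_\ell$ via Lemma~\ref{lem:path}. These take $2n-2\ell$ bits in total, and each segment length is known from $|P_i|$.
\item By statement~2 of Lemma~\ref{lem:tit}, the ancestor-side subtree is always the leftmost child. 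So the preorder of $\T$ restricted to $P_i$ is the left-before-right preorder of $\T_i$.
\item Therefore the $k$-th occurrence of $i$ in $I$ is the $k$-th node of $\T_i$ in preorder. Its inorder rank gives its vertex on $P_i$. This yields the full preorder label sequence, and hence $\T$.
\end{itemize}
This uses $\ceil{n\log\ell}+2n-2\ell$ bits and is still constructible in $O(n)$ time.
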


\begin{example}
	Consider how to decode the label of the node $u$ in Figure~\ref{figure:general}(c) using the representation of Theorem~\ref{thm:general_upper}. 
	Given that $u$ is labeled $2$ in $\T'$, we reconstruct $P_2$ from $T$, and the tree structure of $\T_2$ from $\T'$. Since $u$ is the first node in the inorder traversal of $\T_2$, the answer is the label of the first node in $P_2$ under the preorder traversal of $T$, which is $g$.
\end{example}

\section{Succinct Representation of Steiner-Closed Search Trees on Trees}\label{sec:steiner}
In this section we present a representation of a Steiner-closed search tree $\T$
on $T = (V, E)$ with $n$ vertices and $\ell$ leaves.
First, we treat $T$ as a rooted and ordered tree as follows:
(1) the root $r$ of $T$ is set to be the root of $\T$, and
(2) the siblings in $T$ are ordered according to their labels.
Therefore, by using $O(\log n)$ additional bits to indicate $r$, we can regard $T$ as a rooted and ordered tree.

The overall idea of the representation follows a strategy similar to the representation of Theorem~\ref{thm:general_upper},
where we decompose $T$ into $\ell$ distinct paths $P_1, \dots, P_{\ell}$ 
and construct the search trees $\T_1, \dots, \T_{\ell}$ on these paths by contracting edges of $\T$.
However, unlike Theorem~\ref{thm:general_upper}, where the paths are defined independently of $\T$,
here each path is determined based on the structure of $\T$.
In Lemma~\ref{lem:steiner-closed3},
we show that when $\T$ is Steiner-closed,
the subgraph of $\T$ induced by the nodes in $P_i$ forms a connected subgraph.
Using this property, we show that $\T$ can be reconstructed from $\T_1, \dots, \T_{\ell}$ without storing the path index of each node in $\T$,  
which is required in the representation of Theorem~\ref{thm:general_upper}. 
Before introducing the detailed decomposition procedure, we first present the following proposition, which shows that, when $\T$ is a Steiner-closed STT, the sibling order in $\T$ coincides with the order induced by the preorder numbers of the corresponding vertices in $T$.

\begin{proposition}\label{prop:siblingorder}
For any non-root node $u$ and its left sibling $v$ in a Steiner-closed STT $\T$ of $T$, $\preorder_T(v) < \preorder_T(u)$.
\end{proposition}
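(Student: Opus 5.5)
The idea is to compare, for two siblings, the preorder numbers of the sibling nodes themselves with the smallest preorder numbers of their components. Recall that the sibling order in $\T$ is defined by the smallest preorder number in the corresponding components, and that $T$ is rooted at $r$, the root of $\T$.

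\textbf{Setup.} Let $w$ be the common parent of $u$ and $v$ in $\T$. Let $C$ be the vertex set of $\T(w)$; it is a connected subtree of $T$. Let $C_u$ and $C_v$ be the vertex sets of $\T(u)$ and $\T(v)$; these are two components of $T[C]\setminus\{w\}$, each containing a $T$-neighbour of $w$. Since $r$ is a $\T$-ancestor of $w$ (or equals $w$), at most one component of $T[C]\setminus\{w\}$ contains the $T$-parent $p$ of $w$. Call it the \emph{upper} component; every other component lies inside $T(c)$ for some child $c$ of $w$ in $T$.

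\textbf{Case 1: neither $C_u$ nor $C_v$ is upper.} Then $C_u\subseteq T(c_1)$ and $C_v\subseteq T(c_2)$ for distinct children $c_1\neq c_2$ of $w$. The preorder intervals of $T(c_1)$ and $T(c_2)$ are disjoint, so every element of one set precedes every element of the other. Hence comparing minima gives the same answer as comparing $u$ with $v$. Since $v$ is the left sibling, we get $\preorder_T(v)<\preorder_T(u)$.

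\textbf{Case 2: one of them, say $C_x$ with $x\in\{u,v\}$, is upper.} Let $y$ be the other sibling; $C_y$ lies in $T(w)\setminus\{w\}$. Since $p\in C_x$ and $\preorder_T(p)<\preorder_T(w)<\preorder_T(z)$ for all $z\in C_y$, the minimum of $C_x$ is smaller. So $x$ is the left sibling, i.e.\ $x=v$. It remains to show $\preorder_T(x)<\preorder_T(y)$. It suffices to prove that $x$ is a proper $T$-ancestor of $w$, since then $\preorder_T(x)<\preorder_T(w)<\preorder_T(y)$.

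\textbf{Main step: Steiner-closedness forces $x$ to be a $T$-ancestor of $w$.} Let $S$ be the set of nodes on the $\T$-path from $r$ to $x$. It contains $r$, the $\T$-ancestors of $w$, $w$ itself, and $x$. By assumption $S$ is Steiner-closed. Suppose $x$ is not on $P_T(r,w)$. The $T$-path from $x$ to $w$ passes through $p$ and otherwise stays inside $C_x$. Let $b$ be the vertex where $P_T(x,w)$ first meets $P_T(r,w)$. Then $b$ is a proper $T$-ancestor of $w$ with $b\ne x$, and $b\in C_x$. The vertex $b$ has degree $3$ in $T[CH_T(S)]$: it has neighbours toward $r$, toward $w$, and toward $x$. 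Note $b\ne r$, because $r\notin C$ when $w\neq r$, and if $w=r$ there is no upper component. By Steiner-closedness, $b\in S$. But every element of $S\setminus\{w,x\}$ is a strict $\T$-ancestor of $w$ and therefore lies outside $C$, while $b\in C_x\setminus\{x\}$. This is a contradiction, so $x\in P_T(r,w)$, which finishes Case 2.

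I expect this main step to be the main obstacle: pinning down the branching vertex $b$ and checking that it lies neither in $S$ nor at a degenerate position (such as $b=x$, $b=w$ or $b=r$).
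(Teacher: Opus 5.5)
Your proof is correct. It ends at the same contradiction as the paper, but you reach it by a different case split.

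\textbf{How the paper argues.} It assumes $\preorder_T(v)>\preorder_T(u)$ for contradiction and introduces the minimum-preorder vertex $v'$ of $\T(v)$. It then compares $z=\lca_T(u,v)$ with $z'=\lca_T(v,v')$ to conclude that $w\in P_T(z,u)$. Finally it notes that $z$ has degree at least $3$ in $T[CH_T(H)]$ but is missing from $H$, where $H$ is the node set of $P_\T(r,v)$.

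\textbf{How you argue.} You split on whether one sibling's component contains the $T$-parent $p$ of $w$.
\begin{itemize}
\item If neither does, disjointness of the preorder intervals of distinct $T$-child subtrees of $w$ settles the claim, with no use of Steiner-closedness.
\item If one does, your branching vertex $b=\lca_T(v,w)$ equals the paper's $z$, because $u\in T(w)$ and $v\notin T(w)$. It yields the same violation for the same set $S=H$.
\end{itemize}

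\textbf{What your route gains.}
\begin{itemize}
\item It shows exactly where the hypothesis is needed: only for the component containing $p$.
\item It replaces the paper's terse ``otherwise $v'\notin\T(v)$'' justifications with explicit component arguments.
\item In the second case it proves a stronger fact: the upper sibling is a proper $T$-ancestor of $w$. This is essentially Lemma~\ref{lem:steiner_left}, so your argument establishes both statements at once.
\end{itemize}

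Two cosmetic points. First, $b$ has degree \emph{at least} $3$ in $T[CH_T(S)]$, not exactly $3$; that is all you need. Second, at most one component contains $p$ simply because components are disjoint; this does not depend on $r$ being a $\T$-ancestor of $w$.
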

\begin{proof}
Suppose that $v$ is a left sibling of $u$ in $\T$, but $\preorder_T(v) > \preorder_T(u)$, and let $w$ be the parent of $u$ and $v$ in $\T$.
Let $v'$ be the node in $\T(v)$ with the smallest preorder number in $T$.
Since $v$ is the left sibling of $u$, the definition of the sibling ordering in STTs implies that $\preorder_T(v') < \preorder_T(u)$. Hence, $v' \neq v$, and $u$ cannot be an ancestor of $v$ in $T$ (otherwise, $v' \not\in \T(v)$).
Furthermore, the two vertices $z = \lca_{T}(u, v)$ and $z' = \lca_{T}(v, v')$ satisfy that $z'$ is an ancestor of $z$ in $T$. Otherwise, suppose that $z'$ is a proper descendant of $z$. Then there are two cases.
If $\preorder_T(u) < \preorder_T(z')$, then
$\preorder_T(u) < \preorder_T(z') < \preorder_T(v')$,
contradicting the assumption that $\preorder_T(v') < \preorder_T(u)$.
On the other hand, if $\preorder_T(u) > \preorder_T(z')$, then $\preorder_T(u)$ is greater than the preorder numbers of all vertices in $T(z')$, and in particular,
$\preorder_T(u) > \preorder_T(v)$,
contradicting the assumption that $\preorder_T(u) < \preorder_T(v)$.
Moreover, $w \neq z$ and $w \in P_T(z, u)$ (otherwise, from the properties of $z$ and $z'$, $v' \not\in \T(v)$), which implies that $z \in CH_T(\{r, v, w\})$.

Now let $H$ be the set of nodes on the path $P_{\T}(r, v)$.  
Then $v, w \in H$, but $z, z' \notin H$ (otherwise, $v' \not\in \T(v)$).  
Thus, $CH_T(H) \setminus H$ contains the vertex $z$, which has degree at least $3$ in the subtree of $T$ induced by $CH_T(H)$. 
This contradicts the assumption that $\T$ is Steiner-closed.
\end{proof}

\subparagraph*{Decomposition of $T$ into paths. }
As described earlier, we decompose $T$ into $\ell$ distinct paths $P_1, \dots, P_{\ell}$, 
which are defined based on the structure of $\T$.
The decomposition procedure consists of two main steps:
(i) \textit{Initial decomposition}, and
(ii) \textit{Modification procedure}.
Briefly, the initial decomposition partitions $T$ into paths by traversing $\T$ in preorder of $\T$, and the modification procedure refines these paths so that each path contains exactly one leaf of $T$ and all vertices in a path have distinct depths, as in the paths defined in Section~\ref{sec:general}.

The initial decomposition is performed as follows.
First, we initialize the vertex sets $V_1 = V_2 = \dots = V_{\ell} = \emptyset$, 
and traverse $\T$ in preorder. Whenever we traverse the node $u \in \T$, we find the smallest index $i \in [\ell]$ such that the induced subgraph of $T$ on $CH_T(V_i \cup \{u\})$ is a path and does not include any vertices belonging to sets other than $V_i$, and add $u$ to $V_i$. After the traversal, we define $P_i = T[V_i]$.
By construction, $V_1, \dots, V_{\ell}$ form a partition of $V$. 
The following three lemmas show that the initial decomposition is well-defined and that each $\T[V_i]$ forms a connected subgraph of $\T$ (the proofs are presented in \Cref{app:steiner-proofs}).

\begin{restatable}{lemma}{firstlemma}
\label{lem:steiner-closed1}
After the initial decomposition, for every $i \in [\ell]$: $P_i$ is a path in $T$.
\end{restatable}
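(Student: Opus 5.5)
The plan is to prove a slightly stronger invariant by induction over the preorder traversal of $\T$, and to read off the lemma at the end. The invariant I would maintain is: after each step, for every $i \in [\ell]$, (a) $T[CH_T(V_i)]$ is a path, (b) every vertex of $CH_T(V_i)\setminus V_i$ is still unvisited, and (c) the hulls $CH_T(V_i)$ are pairwise disjoint. The insertion rule ``add $u$ to the smallest $V_i$ such that $T[CH_T(V_i\cup\{u\})]$ is a path avoiding the other sets'' preserves (a) and the part of (c) concerning already assigned vertices directly. When the traversal ends, every vertex has been visited, so (b) forces $CH_T(V_i)=V_i$, and (a) then gives that $P_i=T[V_i]$ is a path. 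The work is therefore in two points: showing that the rule is always applicable, and proving (b), i.e.\ that a ``hole'' $x\in CH_T(V_i)\setminus V_i$ is eventually put into $V_i$ and not into some other set.

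For applicability, I would argue that when $u$ is visited, its parent $p$ in $\T$ has already been assigned, say $p\in V_j$. Either the hull of $V_j\cup\{u\}$ stays a path free of other sets' vertices, or some empty set $V_{i'}$ remains. An empty set always accepts $u$. To guarantee that $\ell$ sets suffice, I would charge each newly opened set to a distinct leaf of $T$: a new set is opened only when extending existing hulls at $u$ would create a branching, and that branching direction leads to a leaf of $T$ not yet covered by any hull.

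The core step is (b). Let $x$ be an unvisited vertex of $T$ lying on $P_T(a,b)$ with $a,b\in V_i$ already visited. I would first show that $a$ and $b$ are both ancestors of $x$ in $\T$. Indeed, $z=\lca_{\T}(a,b)$ lies on $P_T(a,b)$, since removing $z$ must separate $a$ and $b$ unless $z\in\{a,b\}$. Moreover, $x$ stays in the same component as the later of $a,b$ until that node is removed. Combined with the fact that preorder visits $x$ after $a$ and $b$, this gives the ancestry claim. The set $H$ of $\T$-ancestors of $x$ together with $x$ is then Steiner-closed, so every vertex of $CH_T(H)\setminus H$ has degree $2$.

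Using this, I would show two things. First, inserting $x$ into $V_i$ is legal: the hull does not change, since $x$ already lies in $CH_T(V_i)$. Second, for any $j<i$, the hull of $V_j\cup\{x\}$ would have to leave the path through $V_i$ at a branch vertex outside $H$ of degree at least $3$ in $T[CH_T(H)]$, contradicting Steiner-closedness, in the spirit of the argument in Proposition~\ref{prop:siblingorder}. Otherwise it would contain vertices of $V_i$, which the rule forbids.

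I expect this second part, excluding that a hole of $V_i$ is captured by a smaller-index set $V_j$, to be the main obstacle. It needs a careful case analysis on where $CH_T(V_j)$ attaches to $P_T(a,b)$. I would first try the following route: pick the vertex of $CH_T(V_j\cup\{x\})$ closest to $x$ on the $V_j$ side, and show that it is a degree-$3$ Steiner point of $CH_T(H)$ that is not in $H$.
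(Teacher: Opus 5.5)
\textbf{Verdict.} Your approach is workable and is organized differently from the paper's, but as written it has a genuine gap at the step you flag, plus a false intermediate claim; both can be repaired.

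\textbf{Comparison with the paper.} Your forward induction on a hull invariant differs from the paper's proof, which argues on the final partition. The paper takes $w\in V_j$ on $P_T(u,v)$ with $u,v\in V_i$, shows that $w$ is visited after $u,v$ and that $j<i$, and builds a four-way branching at $w$ from earlier vertices $w_1,w_2\in V_j$. It then contradicts Steiner-closedness of $P_{\T}(r,u)$ or $P_{\T}(r,w_1)$ via $z=\lca_{\T}(w_1,u)$.

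\textbf{First repair: the ancestry claim.} As stated, it is false for an arbitrary visited pair $a,b\in V_i$ around $x$. Let $T$ be the path $a,x,m,b$, and let $\T$ have root $m$ with children $b$ and $a$ ($b$ visited first) and $x$ the child of $a$. When $x$ is visited, $a,b\in V_1$ and $x\in P_T(a,b)$, yet $b$ is not an ancestor of $x$. The fix is to take $a,b$ to be the visited vertices of $V_i$ nearest to $x$ on either side. Then the interior of $P_T(a,b)$ is unvisited by (b), which forces $\lca_{\T}(a,b)\in\{a,b\}$. Moreover, no proper ancestor of the deeper of the two lies on its path to $x$, so both $a$ and $b$ are $\T$-ancestors of $x$.

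\textbf{Second repair, the main gap: the branch vertex.} The step you identify as the obstacle does not close as sketched. For the branch vertex $y$ to have degree at least $3$ in $T[CH_T(H)]$, $H$ must contain a vertex beyond $y$ on the $V_j$ side. Your $H$ contains $a$ and $b$, but nothing you have shown puts a vertex of $V_j$ into it; without one, $y$ may have degree $2$ in $CH_T(H)$ and there is no contradiction.

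The missing idea is the lca trick again. Take any $c\in V_j$ and let $g=\lca_{\T}(c,x)$.
\begin{itemize}
\item $g\neq x$, because $c$ precedes $x$ in preorder.
\item If $g\neq c$, then $g$ separates $c$ from $x$, so $g\in P_T(c,x)\subseteq CH_T(V_j\cup\{x\})$. This hull contains no assigned vertex of another set, so $g\in V_j$.
\end{itemize}
Hence you may take $c\in V_j$ to be a proper $\T$-ancestor of $x$. Then $P_T(x,c)$ avoids $a,b\in V_i$. Let $y$ be its last vertex on $P_T(a,b)$; $y$ is either $x$ or an unvisited interior vertex. The node set of $P_{\T}(r,p)$, where $p$ is the parent of $x$, contains $a,b,c$ but not $y$. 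But $y$ has three neighbours in $CH_T(\{a,b,c\})$, contradicting Steiner-closedness. Use the proper ancestors of $x$ rather than your $H\ni x$, so that the case $y=x$ is covered. This argument applies to every $j\neq i$, including the case $x\in CH_T(V_j)$ already. So it also supplies the half of (c) that you did not obtain directly.

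\textbf{The bound of $\ell$ sets.} Your charging argument is not needed for this lemma: allow unboundedly many indices, and the bound is what Lemma~\ref{lem:steiner-closed2} provides. As written it is also unsupported, since ``the branching direction leads to an uncovered leaf'' is essentially the content of that lemma.

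\textbf{What your route buys.} With these repairs your proof is complete. Compared with the paper, it makes explicit two things the paper's write-up uses tacitly. The first is the invariant that a hull contains no visited vertex of another set, which lies behind ``otherwise, $w$ would have been placed in $P_i$''. The second is that the contradiction comes from a degree-$3$ vertex inside the hull of a single root path, which is exactly what Steiner-closedness forbids, rather than from the degree of $w$ in $T$.
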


\begin{restatable}{lemma}{secondlemma}
\label{lem:steiner-closed2}
After the initial decomposition, for every path $P_i$ in $T$, every vertex in $P_i$ that is not a leaf in $T$ has at least one child in $T$ that also lies in $P_i$. 
Consequently, $P_i$ contains at least one (and at most two) leaves of $T$ for all $i \in [\ell]$.
\end{restatable}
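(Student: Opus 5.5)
The plan is to prove the first statement by contradiction, relying on the structure of the greedy initial decomposition together with the Steiner-closed property. The second statement then follows from Lemma~\ref{lem:steiner-closed1}.

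\textbf{Consequence first.} Fix $i$ and let $x_0$ be the vertex of $P_i$ of minimum depth in $T$. By Lemma~\ref{lem:steiner-closed1}, $P_i$ is a path. Assuming the first statement, we can repeatedly move from the current vertex to a child in $T$ that lies in $P_i$, starting at $x_0$. Depth strictly increases, so this walk ends at a leaf of $T$ inside $P_i$, which gives at least one leaf. A leaf of $T$ has degree at most one in $T$, so inside the path $P_i$ it must be an endpoint. A path has two endpoints, so $P_i$ contains at most two leaves.

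\textbf{Main statement.} Suppose $x \in V_i$ is not a leaf of $T$, yet no child of $x$ in $T$ lies in $V_i$. Pick a child $c$ of $x$ in $T$, and let $y$ be the vertex of $T(c)$ visited first in the preorder traversal of $\T$. The plan has three steps.
\begin{enumerate}
\item \textbf{Where $y$ sits.} Because $r$ is the common root of $T$ and $\T$, the $\T$-ancestors of $y$ are exactly the vertices visited before $y$ that separate $y$ from $r$. Using that $y$ is first in $T(c)$, I would argue that the root-to-$y$ path $H$ in $\T$ contains $x$ or reaches $T(c)$ only through $x$. Then $CH_T(H)$ runs through $x$ into $T(c)$.
\item \textbf{Applying Steiner-closedness.} Any vertex of $P_T(x,y)$ not yet assigned when $y$ is processed lies in $CH_T(H)\setminus H$, so it has degree exactly two there. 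From this I would deduce that $CH_T(V_i \cup \{y\})$ is still a path and meets no vertices of other sets. The reason is that any branching, or any vertex already assigned to some $V_j$ with $j\neq i$, on $P_T(x,y)$ would be a $\T$-ancestor of $y$ lying in $T(c)$ and visited earlier, contradicting the choice of $y$.
\item \textbf{Ruling out smaller indices.} Since $y$ is not placed in $V_i$, the greedy rule must assign it to some $V_j$ with $j<i$. For such $j$, the set $V_j$ either contains a vertex outside $T(c)$, in which case $CH_T(V_j\cup\{y\})$ passes through $x\in V_i$ and is invalid, or it lies entirely inside $T(c)$, which contradicts $y$ being the first vertex of $T(c)$ visited. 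Either way we reach a contradiction, so $y\in V_i$. Repeating the argument along $P_T(x,y)$ then shows the child $c$ itself lies in $V_i$.
\end{enumerate}

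\textbf{Main obstacle.} I expect the hardest part to be step 2 combined with the timing issue. Vertices of $V_i$ may be added after $x$, so the configuration at the moment $y$ is processed needs careful bookkeeping. I would handle this by inducting over the preorder of $\T$, maintaining the invariant that each current $V_j$ induces a path whose unassigned hull-neighbors are exactly those forced by Proposition~\ref{prop:siblingorder} and the Steiner-closed condition.
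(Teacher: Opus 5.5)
Your Steps 2 and 3 rely on $x$ already lying in $V_i$ when $y$ is processed, and that need not be so. Because $T(c)$ is connected, $y$ is its unique shallowest vertex in $\T$. Hence $x$ is either a proper $\T$-ancestor or a proper $\T$-descendant of $y$. (Your Step 1 description of the $\T$-ancestors of $y$ is incorrect. For the path $r-a-b-y$ with $\T$ rooted at $r$, whose child $b$ has children $a$ and $y$, the vertex $a$ precedes $y$ and lies on $P_T(r,y)$ but is not a $\T$-ancestor of $y$.)

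If $x$ precedes $y$, your argument can be completed with three repairs. First, by Lemma~\ref{lem:steiner-closed1} the current hull of $V_i$ lies inside the final $V_i$, so it contains no child of $x$ and $x$ is an endpoint of it. That, not the absence of branching on $P_T(x,y)$, is why adding $y$ keeps a path. Second, every $V_j$ with $j<i$ is already nonempty, since sets are opened in index order. Third, once $y\in V_i$, Lemma~\ref{lem:steiner-closed1} gives $c\in V_i$ directly, with no repetition needed. This case uses Steiner-closedness only through Lemma~\ref{lem:steiner-closed1}.

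If instead $y$ is a $\T$-ancestor of $x$, your steps break down. Then $x$ is unassigned when $y$ is processed and $V_i$ may still be empty. The step ``passes through $x\in V_i$'' is unavailable, and your Step 3 says nothing about an empty $V_j$. When $x$ has at least two children you can avoid this by choosing $c$ well: if the tops of two child subtrees both preceded $x$, the root-to-node path of $\T$ ending at the deeper one would omit $x$, which has degree $3$ in its hull. When $x$ has a single child there is no choice.

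That remaining case is exactly where Steiner-closedness must do real work, and the condition you invoke, on the root-to-$y$ path, is not enough. Let $r$ have children $d_1,d_2,x$, let $x$ have the single child $c$, and let $c$ have leaf children $a,b$. Let $\T$ have root $r$ with children $d_1,d_2,a$ and the chain $a\to b\to c\to x$. The greedy rule gives $V_1=\{r,d_1,d_2\}$, $V_2=\{a,b,c\}$, $V_3=\{x\}$, so the lemma fails at $x$. Yet the root-to-$y$ path $\{r,a\}$ satisfies the Steiner condition. The violation lies on the path $\{r,a,b\}$, whose hull contains the omitted vertex $c$ of degree $3$.

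The missing idea is the one in the paper:
\begin{itemize}
\item With $c\in V_j$, the greedy rule and Lemma~\ref{lem:steiner-closed1} force $j<i$ and $V_j\subseteq T(c)$.
\item Since $x$ could not join $V_j$, there are $w_1,w_2\in V_j$ processed before $x$ with $\lca_T(w_1,w_2)=c$.
\item For the first such pair, one shows that $w_2$ precedes $c$ in $\T$ and that $w_1$ is a $\T$-ancestor of $w_2$.
\item Then $c\in CH_T(V')\setminus V'$ has degree at least $3$, where $V'$ is the vertex set of $P_\T(r,w_2)$, contradicting Steiner-closedness.
\end{itemize}
The inductive invariant sketched in your last paragraph is too vague to supply this.
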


\begin{restatable}{lemma}{thirdlemma}
\label{lem:steiner-closed3}
$\T[V_i]$ is a connected subgraph of $\T$ for all $i \in [\ell]$.
\end{restatable}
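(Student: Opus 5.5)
We need to prove that $\T[V_i]$ is connected in $\T$. The plan is to show that every node $u \in V_i$ other than the first node added to $V_i$ (call it $u_i$, the first node of $V_i$ in the preorder of $\T$) has its $\T$-parent in $V_i$, or more generally that the path from $u$ to $u_i$ in $\T$ stays inside $V_i$. Since nodes are added to the sets in preorder of $\T$, every node $x$ on the $\T$-path from the root to $u$ is processed before $u$. The argument will show that the ancestors of $u$ in $\T$ that lie below $u_i$ all belong to $V_i$. It will also show that $u_i$ is an ancestor of $u$ in $\T$.

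\textbf{Step 1: $u_i$ is an ancestor of $u$ in $\T$.} Let $u \in V_i$ with $u\neq u_i$, and let $z=\lca_{\T}(u_i,u)$. Suppose $z \notin \{u_i,u\}$. Then $u_i$ and $u$ lie in different components of $T\setminus\{z\}$, so $z \in P_T(u_i,u)\subseteq CH_T(V_i)$. Because $z$ precedes $u_i$ in preorder, $z$ was already placed in some set $V_j$. Moreover $j\ne i$, since $u_i$ is the first node of $V_i$. When $u$ was added to $V_i$, the rule required that $CH_T(V_i\cup\{u\})$ contain no vertex of another set, and $z$ is such a vertex, which is a contradiction. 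Preorder also rules out $u$ being an ancestor of $u_i$. Hence $u_i$ is a proper ancestor of $u$.

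\textbf{Step 2: every intermediate ancestor lies in $V_i$.} Let $x$ be a node strictly between $u_i$ and $u$ on the $\T$-path. Suppose $x \in V_j$ with $j\neq i$. Then $x$ was processed before $u$. Since $u$ lies in $\T(x)$ and $u_i$ does not, $u$ and $u_i$ are separated by $x$ in $T$, so $x \in P_T(u_i,u)$. This is the same contradiction as in Step 1. The only remaining worry is that $x$ belongs to no set at the time $u$ is added. This cannot happen, because $x$ precedes $u$ in preorder and every processed node is assigned to some set. The assignment is well defined by Lemma~\ref{lem:steiner-closed1} and the fact that the procedure always finds an index, since an empty $V_i$ trivially qualifies once enough indices exist.

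\textbf{Conclusion and main obstacle.} Thus the whole $\T$-path from $u_i$ to every $u\in V_i$ lies in $V_i$, so $\T[V_i]$ is connected and rooted at $u_i$. The delicate point is justifying that the separating vertex lies on $P_T(u_i,u)$. This follows from the STT definition: the subtree $\T(x)$ is exactly an STT on a connected component of $T$ minus the ancestors of $x$, and any $T$-path leaving that component passes through an ancestor of $x$. I would state this carefully for $z$ and for $x$ to make sure the contradiction uses the ``no vertices of other sets in the hull'' clause of the decomposition rule. In particular, I would check that it is the hull $CH_T(V_i\cup\{u\})\supseteq P_T(u_i,u)$ that gets tested. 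Steiner-closedness itself does not appear to be needed for this lemma.
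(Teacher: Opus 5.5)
Your Step 1 is correct. It is essentially the paper's Case (1): a $\T$-LCA $z\notin\{u_i,u\}$ must lie on $P_T(u_i,u)$, and the hull test forbids this.

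\textbf{The gap is in Step 2.} You claim that an intermediate $\T$-ancestor $x$ of $u$ below $u_i$ lies on $P_T(u_i,u)$. This is false. The fact $u\in\T(x)$ only says that $u$ lies in the component $C_x$ of $T$ minus the proper $\T$-ancestors of $x$. So the $T$-path from $u$ to $u_i$ must leave $C_x$ through \emph{some} proper $\T$-ancestor of $x$ (possibly $u_i$ itself), not necessarily through $x$. Your own justification in the last paragraph ("passes through an ancestor of $x$") gives only this weaker statement.

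A concrete failure: let $T$ be a star with center $c$ and leaves $a,b,d$, and let $\T$ be the path $a\to b\to d\to c$, which is a valid STT. The decomposition runs as follows:
\begin{itemize}
\item $a$ and $b$ go into $V_1$.
\item $d$ goes into $V_2$, because $CH_T(\{a,b,d\})$ is not a path.
\item $c$ goes into $V_1$, because $CH_T(\{a,b,c\})$ is a path avoiding $d$.
\end{itemize}
Here $u_i=a$ and $u=c$. The intermediate node $d\in V_2$ is not on $P_T(a,c)=\{a,c\}$, and $\T[V_1]$ is disconnected. This $\T$ is not Steiner-closed: the root-to-$d$ path $\{a,b,d\}$ has a hull containing $c$, which has degree $3$ there. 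So the lemma genuinely depends on Steiner-closedness, contrary to your closing remark. Any argument that never invokes this hypothesis must fail somewhere, and yours fails exactly here.

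\textbf{What is missing} is the treatment of an intermediate node $w\in V_j$ ($j\neq i$) that lies on the $\T$-path between $u,v\in V_i$, with $u$ a $\T$-ancestor of $v$, but off $P_T(u,v)$. The paper handles this as its Case (2). It takes a minimal counterexample (the first such triple in preorder of $\T$) and uses Steiner-closedness of root-to-node paths in $\T$. There are two situations:
\begin{itemize}
\item If $CH_T(\{u,v,w\})$ is not a path, a branching vertex of degree at least $3$ lies in the hull of the root-to-$v$ path of $\T$ but cannot lie on that path.
\item If it is a path with $w$ outside the $u$--$v$ segment, there are further configurations. When $v=\lca_T(u,w)$, the degree-$\ge 3$ vertex $v$ is missing from the root-to-$w$ path. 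The other configurations are excluded using the greedy assignment rule (an earlier node of $V_j$ forces a branch vertex on $P_T(v,w)$) together with the minimal choice of the triple.
\end{itemize}
Your Step 2 must be replaced by an argument of this kind.
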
		
    After the initial decomposition, we apply the following modification procedure to the paths to ensure that
    (a) $P_i$ contains exactly one leaf node $l_i$ of $T$ (recall that $l_i$ is the $i$-th leaf in $T$, ordered by its preorder number in $T$), and
    (b) all vertices in $P_i$ have distinct depths in $T$, as in the decomposition of Section~\ref{sec:general}.

    We decompose each path $P_i$ that contains two leaves of $T$ into two paths as follows.
	Let $M_L$ and $M_R$ be the sets of vertices of $P_i$ in the left and right subtrees of $T(h_i)$, respectively,
	where $h_i$ is the LCA of the two leaves.
	Let $u$ be the first node in $V_i$ according to the preorder traversal of $\T$.
	If $u = h_i$, we arbitrarily assign $h_i$ to $V_L$. 
	Thus, if $u = h_i$ or $u \in M_L$, we define $V_L = M_L \cup \{h_i\}$ and $V_R = M_R$. 
	Otherwise, we define $V_L = M_L$ and $V_R = M_R \cup \{h_i\}$.

	Finally, we decompose $P_i$ into two paths: $T[V_L]$ and $T[V_R]$. 
    Once all paths contain only one leaf, we relabel them so that each $P_i$ denotes the path in $T$ containing $l_i$.

    We now claim that in each decomposition each $\T[V_L]$ and $\T[V_R]$ are connected subgraphs of $\T$, thus, maintaining the property of Lemma~\ref{lem:steiner-closed3}.
    Suppose not, and without loss of generality assume that $u \in V_L$.
    If $u = h_i$, the claim holds immediately. Otherwise, there exist a descendant $v \in V_R$ of $u$ in $\T$ and, 
    without loss of generality, let $v$ be the node with the minimum preorder number in $\T$ among all nodes in $V_R$.
    If $P_\T(r,v)$ contains $h_i$, then $h_i$ must be the last node in $V_L$ according to the preorder traversal of $\T$; otherwise, by the definition of $T$, $\T[V_L]$ would contain no nodes from $V_R$. 
	In this case, both $\T[V_L]$ and $\T[V_R]$ are connected subgraphs of $T$. 
	Otherwise, $P_\T(r,v)$ contains $u$ but not $h_i$, which contradicts the Steiner-closed property of $\mathcal T$ (recall that $r$ is also the root of $T$).
	Furthermore, by Lemma~\ref{lem:steiner-closed2}, both $T[V_L]$ and $T[V_R]$ contain exactly one leaf of $T$.  

\begin{example}
	Consider a tree $T$ and its Steiner-closed search tree $\T$ in  Figure~\ref{figure:steiner}(a) and (b), respectively. We first define the root of $T$ as $d$, which is a root of $\T$.
	The initial decomposition decomposes $T$ into five disjoint paths: $V_1 = \{d, b, f, g\}$, $V_2 = \{c, a\}$, $V_3 = \{i, e\}$, $V_4 = \{h\}$, and $V_5 = \emptyset$.
	Next, the modification procedure partitions $P_1$ into two paths: one containing $\{d, b\}$ and the other containing $\{f, g\}$, and relabels the paths $P_1 \dots, P_5$, as shown in Figure~\ref{figure:steiner}(a).  
\end{example}

\begin{figure}[t]
	\begin{center}
		\includegraphics[scale=0.65]{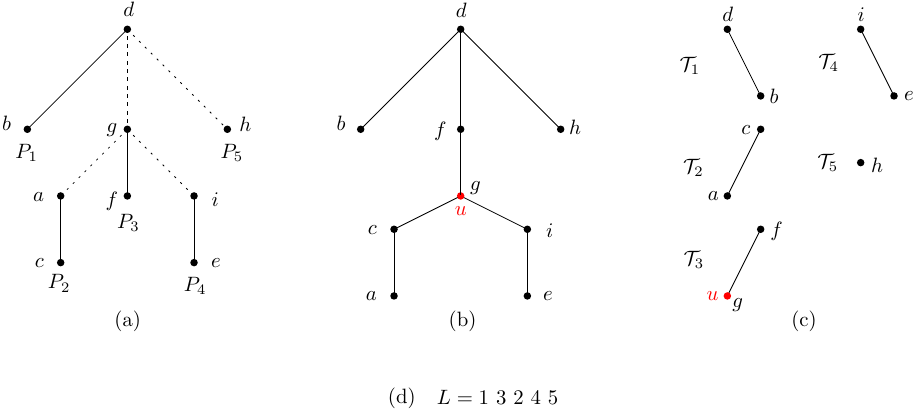}
	\end{center}
	\caption{(a) A $T$ with paths $\{P_1, \dots, P_5\}$ (dotted lines indicate edges that are not part of any path in $P$), 
		(b) Steiner-closed search tree $\T$ on $T$, 
		(c) the shapes of $\T_1, \dots, \T_5$, 
		(d) an array $L$. The representation of $\T$ given in Theorem~\ref{thm:steiner} consists of the shapes of $\T_1, \dots, \T_5$ (without the labels), the RMQ encoding of $L$, and the position of the root node in $T$ ($T$ is initially an unrooted tree).
		}
	\label{figure:steiner}
\end{figure}

\subparagraph*{Representation of $\T$. }
We define $\T_1, \dots, \T_{\ell}$ 
as the trees derived from the paths $P_1, \dots, P_{\ell}$, respectively, as in Section~\ref{sec:general}.  
By Lemmas~\ref{lem:steiner-closed1} and \ref{lem:steiner-closed3}, $\T_1, \dots, \T_{\ell}$ are search trees on their corresponding paths and form connected subgraphs of $\T$ (note that this latter property does not generally hold for arbitrary search trees on trees).

Now we describe the representation of $\T$.  
Let $s_i$ be the vertex in $P_i$ with the minimum distance from the root of $T$, i.e., $P_i = P_T( s_i, l_i)$. 
We first store the tree structures of $\T_1, \dots, \T_{\ell}$ using the representation of Lemma~\ref{lem:path}.  
By concatenating these representations into a single bit string in the order of the preorder numbers of $s_i$'s in $T$, this requires $2n - 2\ell$ bits in total.

Unlike the paths defined in Section~\ref{sec:general}, we cannot decode the paths from $T$ without additional information.  
However, using the properties of Lemmas~\ref{lem:steiner-closed1}, \ref{lem:steiner-closed2} and \ref{lem:steiner-closed3}, we can decode them using only a small extra space as follows.  
We first explicitly store the position of the root $r$ of $\T$ (which is also the root of $T$) using $\ceil{\log n}$ bits.  
Next, imagine an array $L[1, \dots, \ell]$ of size $\ell$,  
where $L[i] = j$ if and only if $s_i$ is the $j$-th entry in the sequence of vertices $S = (s_1, \dots, s_{\ell})$, ordered by the preorder traversal of $T$ (see Figure~\ref{figure:steiner}(d) for an example).  
Storing array $L$ would require too much space. Instead let us show that by performing {\em Range Minimum Queries (RMQ)} on $L$,\footnote{The RMQ is defined as follows: for $1 \le i \le j \le \ell$, $\RMQ{}(i,j)$ returns the position of the minimum value in the subarray $L[i, \dots, j]$} we can determine all paths $P_1, \dots, P_\ell$.

First, we reconstruct $P_{\min} = P(r, l_{\min})$ by computing $\min = \RMQ{}(1, \ell)$.  
Let $S_L$ (resp., $S_R$) be the set of all left (resp., right) siblings of the vertices in $P_{\min}$.  
Since all paths are disjoint, $S_L \subseteq \{s_1, \dots, s_{\min-1}\}$ and $S_R \subseteq \{s_{\min+1}, \dots, s_{\ell}\}$.  
Let $s_{left}$ be the vertex in $\{s_1, \dots, s_{\min-1}\}$ with the smallest preorder number in $T$, 
and define $s_{right}$ analogously. Then since $s_{left} \in S_L$ and $s_{right} \in S_R$, we can reconstruct $P_{left}$ by computing $left = \RMQ{}(1, \min-1)$ and $P_{right}$ by computing $right = \RMQ{}(\min+1, \ell)$. We repeat this procedure recursively until all paths $P_1, \dots, P_{\ell}$ are determined.

Thus, we do not need to store $L$ explicitly, but rather we just need a data structure that answers RMQ on $L$. Bender and Farach-Colton~\cite{bender2000lca} showed that RMQ queries can be answered via LCA queries on the Cartesian tree of $L$. Therefore, we obtain the RMQ data structure by storing the $2\ell$-bit BP representation of the shape of the Cartesian tree on $L$~\cite{DBLP:journals/cacm/Vuillemin80}.

After reconstructing the paths, we reconstruct the search trees $\T_1, \dots, \T_{\ell}$ with their labels by reading the bit string of their representations sequentially.  
Finally, we merge the search trees into a single tree $\T$.  
By Lemma~\ref{lem:steiner-closed3} and the definition of the sibling ordering in $\T$, we can merge them according to the preorder numbers of the nodes in $T$.

From the above construction it follows that the total space required for the representation of a Steiner-closed STT is $(2n - 2\ell) + \ceil{\log n} + 2\ell = 2n + \ceil{\log n}$ bits.

This representation can be constructed in $O(n\ell)$ time as follows:  
(1) construct a mapping between the vertices in $T$ and the nodes in $\T$ in $O(n)$ time by traversal; (2) perform the initial decomposition and modification procedure on $T$ in $O(n\ell)$ time, since for each node visited during the decomposition we can determine its vertex set in $O(\ell)$ time by checking each set in $O(1)$ time using LCA on $T$; (3) construct $L$ (which can be built during the modification procedure) and $\T_1, \dots, \T_{\ell}$ in $O(n)$ time; and (4) construct the representations of the structures in (3) in $O(n)$ time~\cite{DBLP:journals/siamcomp/FischerH11}.

We summarize the result in the following theorem:

\begin{theorem}\label{thm:steiner}
	Given a labeled and ordered tree $T$ with $n$ vertices and $\ell$ leaves, any Steiner-closed search tree $\T$ on $T$ can be encoded using $2n + \ceil{\log {n}}$ bits. The representation can be constructed in $O(n\ell)$ time. 
\end{theorem}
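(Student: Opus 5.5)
The proof assembles the pieces developed in this section, after fixing the rooting and ordering of $T$ and the path decomposition. First, we root $T$ at the root $r$ of $\T$, which costs $\ceil{\log n}$ bits, and order siblings by label. We then run the initial decomposition followed by the modification procedure. By Lemmas~\ref{lem:steiner-closed1} and~\ref{lem:steiner-closed2} and the argument after Lemma~\ref{lem:steiner-closed3}, this yields $\ell$ ancestor-to-leaf paths $P_i=P_T(s_i,l_i)$, one per leaf, each with vertices of distinct depths. By Lemma~\ref{lem:steiner-closed3}, which the modification step preserves, each $\T[V_i]$ is connected in $\T$. Consequently each $\T_i$, obtained by contraction as in Section~\ref{sec:general}, is a search tree on a path. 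Lemma~\ref{lem:path} then stores $\T_i$ in $2|V_i|-2$ bits, and summing over $i$ gives $2n-2\ell$ bits. We concatenate these encodings in preorder of the $s_i$. Each piece has self-delimiting length once the paths, and hence their sizes, are known.

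Next, we encode the path structure with a Cartesian-tree RMQ encoding of $L$, which costs $2\ell$ bits. The decoding argument is recursive. $\RMQ(1,\ell)$ identifies the path starting at $r$, which is walkable from $r$ by repeatedly descending. The subranges to the left and right of the minimum correspond to the paths hanging off $P_{\min}$ on either side. The minimum-preorder start in each subrange is a sibling of a vertex on an already-reconstructed path, and so it can be located in $T$. For this step, I would argue that, given $s_j$, the path $P_j$ is forced to continue along the leftmost child in $T$. This is because leaves are indexed by preorder and $P_j$ ends at $l_j$, whose index is known from the RMQ position. The total bit count is then $(2n-2\ell)+2\ell+\ceil{\log n}=2n+\ceil{\log n}$.

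The remaining decoding step, which I expect to be the main obstacle, is merging the labelled trees $\T_1,\dots,\T_\ell$ back into $\T$. The key facts are these. Each $\T[V_i]$ is connected, so the contracted tree whose nodes are the paths is itself a tree, rooted at the path containing $r$. The attachment point of $\T[V_j]$ below $\T[V_i]$ is forced by the search-tree property: the root of $\T[V_j]$ must hang from the unique node of $\T_i$ whose component in $T$ contains $V_j$. That node can be found by walking down $\T_i$ and choosing, at each node, the side of $P_i$ from which $V_j$ is reached in $T$. Sibling order is then fixed by Proposition~\ref{prop:siblingorder}, that is, by preorder in $T$. I would make this uniqueness argument precise by induction on $\ell$, peeling off the path whose $\T$-root is deepest.

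Finally, the running time follows as described. Building the vertex–node map takes $O(n)$. In the decomposition, each node of $\T$ tests at most $\ell$ sets in $O(1)$ time each using LCA on $T$, for $O(n\ell)$ total. The splitting, the array $L$, and the trees $\T_i$ take $O(n)$. Building the BP encodings and the Cartesian tree takes linear time~\cite{DBLP:journals/siamcomp/FischerH11}.
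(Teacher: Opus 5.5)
Your outline follows the paper's proof step for step: the root in $\ceil{\log n}$ bits, the Lemma~\ref{lem:path} shapes of the $\T_i$ concatenated in preorder of the $s_i$, the $2\ell$-bit Cartesian tree of $L$, RMQ-driven path recovery, a merge, and the $O(n\ell)$ construction. However, two of your decoding claims are not correct as stated.

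First, $P_j$ need not continue along leftmost children. Let $r$ have a single child $x$ whose children are the leaves $a$ (left) and $b$ (right), and let $\T$ be the path $r\to b\to x\to a$. This $\T$ is Steiner-closed: the only non-convex root path is $\{r,b\}$, and $x$ has degree $2$ in $CH_T(\{r,b\})$. The greedy decomposition yields $\{r,x,b\}$ and $\{a\}$. So the path starting at $r$ is $P_2$, and it leaves $x$ through the right child. What determines $P_j$ is the leaf index, exactly as in the paper's $P_{\min}=P(r,l_{\min})$. That is, $P_j=P_T(s_j,l_j)$ with $l_j$ the $j$-th leaf in preorder. You find $s_j$ by climbing from $l_j$ until the parent lies on an already decoded path. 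That path is indeed already decoded, because the path through $\mathrm{parent}_T(s_j)$ is a Cartesian-tree ancestor of $j$.

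Second, and more importantly, your merge locates the attachment node inside $\T_i$ only once you already know that $\T_j$ hangs from $\T_i$. Your sketch never determines $i$, and Lemma~\ref{lem:steiner-closed3} together with Steiner-closedness cannot determine it. Here is a counterexample. Let $r$ have children $a$ (a leaf) and $q$, where $q$ has leaf children $b,c$, with siblings in the order listed. Take the parts $\{r,a\},\{q,b\},\{c\}$, and compare two trees:
\begin{itemize}
\item $\T$, in which $r$ has children $a,q$ and $q$ has children $b,c$;
\item $\T'$, in which $r$ has children $a,c$ and $c\to q\to b$.
\end{itemize}
Both are Steiner-closed, both have connected parts, and they have identical contracted trees. $\T'$ is excluded only because its own greedy decomposition is $\{r,a\},\{q,c\},\{b\}$, which gives a different encoding. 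Your induction on the block with the deepest $\T$-root does not resolve this either. That block differs between $\T$ and $\T'$, and deleting a path such as $\{q,b\}$ disconnects $T$.

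You therefore need an extra lemma, proved from the greedy assignment and the splitting rule, that pins down the parent block. One natural form: the $\T$-parent of the root of $\T_j$ is $\mathrm{parent}_T(s_j)$, or at least lies on that vertex's path. The paper's one-line merge tacitly relies on such a rule; in its example, $f$ is attached to $d=\mathrm{parent}_T(s_3)$. Once this lemma is in place, your walk inside $\T_i$ and Proposition~\ref{prop:siblingorder} complete the reconstruction.
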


\begin{example}
	Consider how to decode the label of the node $u$ in Figure~\ref{figure:steiner}(b) by reconstructing $\T$ from the representation.  
	First, we decode the path $P_{\min} = P_1$.  
	Here, $S_L = \emptyset$, $S_R = \{g, h\}$, and $right = \RMQ{}(2,5) = 3$, which implies $P_3 = P(g,f)$.  
	By repeating this procedure, we construct all paths $P_1, \dots, P_5$.  
	Next, we decode the tree structures with the labels of $\T_1, \dots, \T_5$ and merge them into a single tree $\T$. For example, to merge $\T_1$ and $\T_3$, we first check the preorder numbers of $b$ and $f$ in $T$, and since $\preorder_T(b) < \preorder_T(f)$, we attach the root of $\T_3$ (the node $f$) as the child of $d$ and as the right sibling of $b$, following the definition of the sibling ordering in $\T$.  
	After fully reconstructing $\T$, we can determine that the label of $u$ is $g$.
\end{example}

\section{Conclusions}

In this paper, we present space-efficient representations of general and Steiner-closed STTs
that can be constructed in polynomial time and achieve optimal space up to lower-order
additive terms, as well as space-efficient data structures supporting fast traversals.

Natural directions include extending our techniques to $k$-cut trees~\cite{splayTT} and designing space-efficient data structures for SplayTT~\cite{splayTT} that support rotations using tiny pointers~\cite{DBLP:conf/soda/BenderCFKT23}. Another interesting direction is to obtain instance-optimal succinct
representations for general STTs whose space depends on the topology of $T$, since the
$n\log\ell$ bound need not be tight for every $T$.
\newpage
\bibliography{ref}
\newpage
\appendix


        

\section{Proofs of the Lemmas in \Cref{sec:steiner}.}\label{app:steiner-proofs} 
Here, we provide the proofs of Lemmas~\ref{lem:steiner-closed1}, \ref{lem:steiner-closed2}, and \ref{lem:steiner-closed3}.

   \begin{figure}[ht]
	\begin{center}
	\includegraphics[scale=0.9]{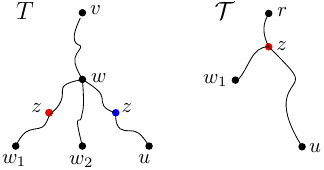}
	\end{center}
	\caption{Supplementary figures for the proof of Lemma~\ref{lem:steiner-closed1}.}
	\label{figure:lemma5}
    \end{figure}

\firstlemma*
\begin{proof}
        By construction, every $CH_T(V_i)$ is a path. Thus, to prove the first property, we need to show that for every pair $u, v \in V_i$ there does not exist another vertex $w \in P_T( u, v) \cap V_j$ with $j \neq i$. 
        Assume for contradiction that such $w$ exist for some pair $u,v \in V_i$ 
        and without loss of generality let $u$ be traversed before $v$ in preorder traversal of $\T$. 
		By the construction of $P_i$ and $P_j$, $w$ must be traversed after both $u$ and $v$ during the preorder traversal of $\T$ (otherwise, either (a) $u$ and $v$ would not be placed in the same path $P_i$, or (b) $u$, $v$, and $w$ would belong to the same path). This implies that $j < i$ (otherwise, $w$ would have been placed in $P_i$). Now we prove the following claim.

        \begin{claim}
            $w$ has at least $4$ neighbors in $T$.
        \end{claim}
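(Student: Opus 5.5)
The plan is to count the neighbors of $w$ in $T$ in two groups. The first group consists of the two neighbors on $P_T(u,v)$. The second group consists of two further neighbors, forced by the vertices of $V_j$ that were already present when $w$ was inserted.

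\emph{First two neighbors.} Since $w \in P_T(u,v)$ and $w \neq u,v$, the vertex $w$ is interior to this path. So it has two distinct neighbors $a$ and $b$ on it: $a$ lies in the component $C_u$ of $T\setminus\{w\}$ containing $u$, and $b$ lies in the component $C_v$ containing $v$.

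\emph{Why $w$ went to $V_j$ and not to $V_i$.} Recall that $w$ is traversed after both $u$ and $v$, and $j<i$. At that moment, adding $w$ to $V_i$ would not change $CH_T(V_i)$, since $w$ is already in the hull. The rule nevertheless placed $w$ in $V_j$, so $j$ was the smallest admissible index. I will also show that $V_j$ is nonempty when $w$ is processed. An empty set is always admissible, so the first vertex ever put into $V_i$ chose $i$ only because every $V_{j'}$ with $j'<i$ was already nonempty. Hence the admissibility test for $V_j$ was meaningful: $CH_T(V_j\cup\{w\})$ is a path that contains no vertex of any other set. In particular it avoids $u\in V_i$ and $v\in V_i$, and every vertex on it lies outside $C_u$ and $C_v$, or else the path would have to pass through $u$ or $v$. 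So all vertices of $V_j$ present at that time lie in components of $T\setminus\{w\}$ other than $C_u$ and $C_v$. This already yields a third neighbor $c\notin\{a,b\}$.

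\emph{The fourth neighbor (main obstacle).} It remains to show that these earlier vertices of $V_j$ meet at least two distinct components of $T\setminus\{w\}$ besides $C_u$ and $C_v$. Equivalently, $w$ is an interior vertex of the path $CH_T(V_j\cup\{w\})$, which gives a fourth neighbor $d$. I expect this to be the hardest step, and I would attack it with the Steiner-closed property.

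Let $H$ be the node set of the root-to-$w$ path in $\T$. Because $u$ and $v$ lie in different components of $T\setminus\{w\}$ but were searched before $w$, one of them must have separated the other from $w$ earlier. I would argue that $H$ contains $u$ or $v$ together with the earlier $V_j$ vertices that brought the search into the current component. Here the choice of the minimum index in the earlier insertions, together with $w$ being placed in the \emph{smallest} admissible set, should rule out $V_j$ lying on just one side of $w$.

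Now suppose, for contradiction, that all earlier $V_j$ vertices lie in a single component $C$. Then the hull $CH_T(H)$ passes through some vertex adjacent to the junction of $C_u$, $C_v$ and $C$, and that vertex is not in $H$. I would show that this vertex has degree at least $3$ in $T[CH_T(H)]$, contradicting the Steiner-closedness of $\T$. That contradiction gives the fourth neighbor, so $w$ has at least four neighbors.
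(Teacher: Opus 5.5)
\textbf{Verdict: genuine gap.} There are two problems: the inference in your second step is false, and the fourth-neighbour step is never actually argued.

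\textbf{The second step.} You conclude that every vertex of $CH_T(V_j\cup\{w\})$ lies outside $C_u$ and $C_v$ ``or else the path would have to pass through $u$ or $v$.'' This is false. $C_u$ is the whole component of $T\setminus\{w\}$ containing $u$, so a path leaving $w$ can enter $C_u$ and branch off $P_T(w,u)$ before reaching $u$. Concretely, let $T$ be the path $x_1$--$y_1$--$w$--$y_2$--$x_2$ with a leaf $u$ attached at $y_1$ and a leaf $v$ at $y_2$. Take the valid, but not Steiner-closed, STT whose preorder is $x_1,x_2,u,v,w,y_1,y_2$. The insertion rule gives $V_1=\{x_1,x_2\}$. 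Then $u,v\in V_2$, since adding either to $V_1$ creates a branch. Then $w\in V_1$, since $y_1,y_2$ are still unassigned. Here $x_1\in C_u$, $x_2\in C_v$, and $w$ has degree $2$, although everything your step uses holds. So the third neighbour cannot come from the admissibility test alone.

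\textbf{The fourth-neighbour step.} This is only a plan. You never show that $H$ contains $u$ or $v$ together with the earlier $V_j$-vertices. You also never show that some vertex outside $H$ has degree at least $3$ in $CH_T(H)$.

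\textbf{Comparison with the paper.} You have the two halves inverted. The paper gets what you call the main obstacle, that $w$ is interior to the $V_j$-path, from the insertion rule, with no appeal to Steiner-closedness. It argues that, since $j<i$, the set $V_j$ already contained $w_1,w_2$ when $u$ was processed and rejected $u$. If $w\notin P_T(w_1,w_2)$, the $V_j$-path could be prolonged through $w$ along $P_T(u,v)$ toward $u$ (or $v$). That vertex would then have received an index $\le j<i$, a contradiction, so $w\in P_T(w_1,w_2)$.

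\textbf{Where Steiner-closedness is really needed.} It is needed for the separation you took for granted. The example shows separation does not follow from $w\in P_T(w_1,w_2)\cap P_T(u,v)$ alone; the paper's closing sentence also passes over this quickly. One way to obtain it:
\begin{enumerate}
\item Choose $w$ as the first-processed vertex of $P_T(u,v)\setminus V_i$. Suppose an earlier $x\in V_j$ lay in $C_u$, and let $y$ be the vertex where $P_T(w,x)$ leaves $P_T(w,u)$.
\item By the choice of $w$, $y$ is either in $V_i$ or unassigned at that time. It cannot be in $V_i$, because it lies on the admissible hull $CH_T(V_j\cup\{w\})$. Hence $y$ is unassigned, and so it is not a $\T$-ancestor of $u$, $x$ or $w$.
\item Also $x\notin P_T(w,u)$ by the choice of $w$, so $u$, $x$, $w$ lie in three distinct components of $T\setminus\{y\}$.
\item Pairwise $\T$-LCAs lie on the corresponding $T$-paths. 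Hence the root path of $\T$ to one of $u$, $x$, $w$ meets all three components but misses $y$, which violates Steiner-closedness.
\end{enumerate}
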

        \begin{proof}
        $P_j$ must have consisted of at least two vertices $w_1, w_2 \in V_j$ at the time when $u$ was visited, and $CH_T(\{w_1, w_2, w\})$ is a path in $T$, whereas $CH_T(\{w_1, w_2, u\})$ is not a path in $T$ 
        (otherwise, $u$ would have been added to $V_j$, since it is traversed before $w$). Also, if $w \not \in P_T( w_1, w_2)$, then $CH_T(\{w_1, w_2, w, u\})$ would also have to be a path, because $w \in P_T(u,v)$ and $CH_T(\{w_1, w_2, w\})$ is a path in $T$. However, this would imply that $CH_T(\{w_1, w_2, u\})$ is also a path, which we said is impossible. Therefore, $w \in P_T( w_1, w_2)$ and recall that by our assumption $w$ belongs to $P_T( u, v)$, yet $CH_T(\{w_1, w_2, u\})$ is not a path in $T$. Thus, $w$ has at least $4$ neighbors ($u$, $v$, $w_1$, and $w_2$) in $T$. 
        \end{proof}
        
        Since the claim holds, $w$ must be $\lca_T( x, y)$ for some $x \in \{u, v\}$ and $y \in \{w_1, w_2\}$.
        Without loss of generality, assume that $w = \lca_T( w_1, u)$, and let $z = \lca_{\T}( w_1, u)$.
        Then $z \neq w$, because both $u$ and $w_1$ precede $w$ in the preorder traversal in $\T$ and $z \in P_\T( w_1, u)$. 
        Moreover, by the definition of $\T$, the vertex $z$ lies either on the path $P_T( w, w_1)$ or on the path $P_T( w, u)$ in $T$ (see Figure~\ref{figure:lemma5}). 

        In case $z$ lies on the path $P_T( w, w_1)$, consider the set of nodes $V'$ that form the path $P_\T( r, u)$. Observe that $V'$ contains $z$ (and $u$), but cannot contain $w$, because $\preorder_{\T}(u) < \preorder_{\T}(w)$ and $u$ is a descendant of $z$ in $\T$.
        This implies that $w \in CH_T(V') \setminus V'$, and, as was shown before, $w$ has degree greater than 2 in $T$, which contradicts the Steiner-closed property of $\T$.

        In case if $z$ lies on the path $P_T( w, u)$, we can apply the same argument to show that the set of nodes $V''$ that form the path $P_\T( r, w_1)$  contains $z$ (and $w_1$), but not $w \in CH_T(V'') \setminus V''$, again contradicting the Steiner-closed property of $\T$.
\end{proof}

 \begin{figure}[ht]
	\begin{center}
	\includegraphics[scale=0.9]{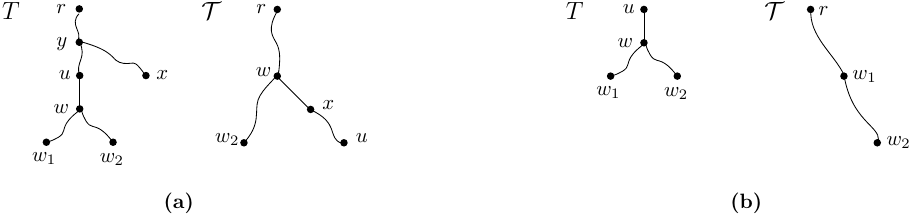}
	\end{center}
	\caption{Supplementary figures for the proof of Lemma~\ref{lem:steiner-closed2}.}
	\label{figure:lemma6}
    \end{figure}

\secondlemma*
\begin{proof}
    Assume for contradiction that $u \in V_i$ is a non-leaf vertex of $T$ with a single child $w \in V_j$ and $j \neq i$ (the case where $u$ has multiple children can be handled analogously under the assumption that none of them belongs to $V_i$).
    If $i < j$, $w$ would have been placed in $V_i$ because the set $CH_T(V_i \cup \{w\})$ forms a single path in $T$. Therefore, it must be that $j < i$.
    Furthermore,the vertices $V_j$ form a (connected) path in $T$ by Lemma~\ref{lem:steiner-closed1}. This implies that all vertices in $V_j$ must lie in $T(w)$ because $w$'s parent $u \notin V_j$.
        
    Thus, there exist two vertices $w_1$ and $w_2$ in $V_j \setminus \{w\}$ with $\lca_T( w_1, w_2) = w$ (otherwise, since $CH_T(\{w, w_1, w_2\})$ is a path in $T$, $CH_T(\{u, w, w_1, w_2\})$ would also be a path in $T$ and $u$ would have been added to $V_j$ during the construction). Therefore, $w$ has at least 3 neighbors in $T$. Now, without loss of generality, let $w_1$ and $w_2$ be the {\em first} two nodes in $V_j \setminus \{w\}$ in the
    preorder traversal of $\T$ that satisfy $\lca_T( w_1, w_2) = w$, 
    and let $\preorder_{\T}(w_1) < \preorder_{\T}(w_2)$.
    Then the following claim holds.

    \begin{claim}
        $\preorder_{\T}(w_2) < \preorder_{\T}(w)$.
    \end{claim}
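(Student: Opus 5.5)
We argue by contradiction: assume $\preorder_{\T}(w) < \preorder_{\T}(w_2)$. There are two sub-cases according to where $w$ falls relative to $w_1$.

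\emph{Case $\preorder_{\T}(w) < \preorder_{\T}(w_1)$.} Here $w$ is placed before $w_1$ and $w_2$. By the construction of the initial decomposition, when $w_1$ is visited $V_j$ already contains $w$. Then consider the moment $w_2$ is visited. The convex hull $CH_T(V_j \cup \{w_2\})$ contains $w_1$, $w$ and $w_2$, and $\lca_T(w_1, w_2) = w$, so the hull is a path only if the current members of $V_j$ all lie in the two subtrees of $w$ containing $w_1$ and $w_2$.

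\emph{Case $\preorder_{\T}(w_1) < \preorder_{\T}(w) < \preorder_{\T}(w_2)$.} This is handled similarly.

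In either sub-case the aim is the same: show that when $w_2$ arrives, $V_j$ already contains two vertices whose LCA is $w$. This contradicts the minimality in the choice of $w_1, w_2$ as the first such pair, and that is the intended key step. I expect this step to fail, however, because the pair $w_1, w$ has LCA $w$ itself, not a pair in $V_j \setminus \{w\}$.

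So the real argument should instead come from the Steiner-closed property together with the parent $u$. Assume $\preorder_{\T}(w) > \preorder_{\T}(w_2)$ fails, i.e.\ $w$ precedes $w_2$. Let $z = \lca_{\T}(w, w_2)$ restricted as needed. Since $w$ is on $P_T(w_1, w_2)$, when $w$ is in $\T$ above $w_2$'s branch, $w_2$ lies in the component of $T \setminus \{w\}$ strictly below one child of $w$. Meanwhile $u$ (the parent of $w$, in $V_i$ with $i > j$) is visited after $w$'s first companions in $V_j$.

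The decisive point uses the path $H = P_{\T}(r, w_2)$. If $w \notin H$, then $w$ lies in $CH_T(H)$, provided $H$ contains vertices on both sides of $w$ (namely $r$, or $u$'s side, and $w_2$). It also has degree $\ge 3$ in $T[CH_T(H)]$, provided $H$ also reaches $w_1$'s side. That contradicts the Steiner-closed property. So the plan is to show $H$ meets the $w_1$-side of $w$, using that $w_1$ precedes $w_2$ and both are separated from $r$ by $w$.

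The main obstacle is establishing this last fact, that some ancestor of $w_2$ in $\T$ lies in $w_1$'s subtree of $w$. I would try to derive it from $\lca_{\T}(w_1, w_2)$: it must lie on $P_T(w_1, w_2)$, and it is not $w$ under the assumption, so it sits on one side of $w$. If it sits on $w_2$'s side, I would run the symmetric argument with $P_{\T}(r, w_1)$.
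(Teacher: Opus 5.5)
Your proposal has a genuine gap. The only contradiction mechanism you actually set up is a Steiner-closed violation at $w$ along $H=P_{\T}(r,w_2)$, and it requires $w\notin H$. But under your hypothesis $\preorder_{\T}(w)<\preorder_{\T}(w_2)$ one always has $w\in H$. To see this, note that $\lca_{\T}(w,w_2)$ lies on $P_T(w,w_2)$, which is contained in $V_j$ by Lemma~\ref{lem:steiner-closed1}. If it were an interior vertex $a$ of that path, then $a$ would precede $w_2$ in $\T$ and satisfy $\lca_T(w_1,a)=w$, contradicting the minimal choice of $w_1,w_2$. Hence $w$ and $w_2$ are comparable in $\T$, and ``$w\notin P_{\T}(r,w_2)$'' is equivalent to the claim itself. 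It is precisely the fact the paper uses \emph{after} the claim to finish Lemma~\ref{lem:steiner-closed2}. So your plan is circular, and the case $w\in H$, the only one that can arise, is never handled. No Steiner-closed violation at $w$ is available there, since $w$ itself lies on $H$.

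The missing idea is to apply the greedy placement rule to the parent $u$. The paper reduces the claim to showing $\preorder_{\T}(u)<\preorder_{\T}(w_2)$, and this holds for the following reason. Either $u$ precedes $w$, or, being adjacent to $w$ in $T$, $u$ is a $\T$-descendant of $w$. In the latter case it lies in the child subtree for the component of $T\setminus\{w\}$ outside $T(w)$. That child subtree is ordered before the one containing $w_2$, because it contains $u$, whose $T$-preorder is smaller than that of every vertex of $T(w)$. (The paper argues this step via a Steiner-closed argument with $y=\lca_T(w,x)$, but the sibling-order definition already suffices.) Now consider the moment $u$ is visited. By minimality of $w_2$, the current $V_j$ consists only of $w$ and vertices on $w_1$'s branch below $w$. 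So $CH_T(V_j\cup\{u\})$ is a path whose vertices other than $u$ all belong to the final $V_j$, hence to no other set. Thus $V_j$ was an admissible choice for $u$ (trivially so if it was still empty), and since $j<i$, $u$ could not have ended up in $V_i$.
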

    \begin{proof}[Proof (of claim)]
         Assume for contradiction that, $\preorder_{\T}(w_2) > \preorder_{\T}(w)$. In this case, it is enough to show that $\preorder_{\T}(u) < \preorder_{\T}(w_2)$. If this holds, then since $CH_T(\{u, w, w_1\})$ forms a single path in this case, $u$ would have been placed in $V_j$ instead of $V_i$, which contradicts to the fact that $u \in V_i$. Therefore, it must hold that $\preorder_{\T}(w_1) < \preorder_{\T}(w_2) < \preorder_{\T}(w)$, which proves the claim.

        To prove $\preorder_{\T}(u) < \preorder_{\T}(w_2)$, 
        assume for contradiction that $\preorder_{\T}(u) > \preorder_{\T}(w_2) > \preorder_{\T}(w)$. 
        Then $u$ is a descendant of $w$ in $\T$, since $u$ and $w$ lie in the same connected component of $T \setminus \{a\}$ for all $a \in V$ other than $u$ and $w$. Similarly, $w_2$ is also a descendant of $w$ in $\T$, since $w$ and $w_2$ lie in the same connected component of $T \setminus \{a\}$ for all $a \in V$ other than the vertices in $P_T( w, w_2)$. In this case, such a vertex $a$ belongs to $V_j$ by Lemma~\ref{lem:steiner-closed1}, and we choose $a = w_2$.
        Thus, $u$ and $w_2$ lie in different child subtrees of $\T(w)$, 
        and this follows that $u$ lies in a child subtree of $\T(w)$ that is to the right of the child subtree containing $w_2$. Let $x$ be the child of $w$ in $\T$ such that the subtree $\T(x)$ contains $u$ and 
        let $y = \lca_T( w, x)$.
        Then 
        the following properties hold (see Figure~\ref{figure:lemma6} (a)): 

        \begin{itemize}
            \item $y$ is neither $w$ nor $x$. This is because the vertices $u$ and $x$ cannot be exist in the same connected component of in $T \setminus \{w\}$ if $x \in T(w)$. Also if $y = x$, $x$ is a left sibling of $w_2$ because siblings in $\T$ are ordered by their preorder numbers in $T$. 

            \item $y \neq r$ (otherwise, $x$ cannot be a child of $w$ in $\T$ from the above property). This implies $y$ has a degree greater than $2$. 
            
            \item The path $P_\T( r, x)$ does not contain the node $y$. Otherwise, $x \not\in \T(w)$ since $w$ and $x$ lie in different connected components of $T \setminus \{y\}$ from the above properties.
        \end{itemize}

        Recall that $x$ is a child of $w$ in $\T$, i.e., the path $P_\T( r, x)$ contains both $w$ and $x$, 
        (b) establishes that it does not contain $y$ that has degree greater than $2$, i.e., it contradicts that $\T$ Steiner-closed. Hence, $\preorder_{\T}(u) < \preorder_{\T}(w_2)$.
    \end{proof}

    Now consider the set of nodes $V'$ that form the path $P_\T( r, w_2)$. Clearly, $w_2 \in V'$, and moreover, the following claim holds.
    
    \begin{claim}
        $w_1$ is an ancestor of $w_2$ in $\T$, i.e., $w_1 \in V'$.
    \end{claim}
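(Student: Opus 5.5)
The plan is to argue by contradiction using $z=\lca_{\T}(w_1,w_2)$. We read ``first two nodes'' lexicographically: among all pairs in $V_j\setminus\{w\}$ whose LCA in $T$ is $w$, the pair $(w_1,w_2)$ minimizes $\preorder_{\T}(w_2)$, and then $\preorder_{\T}(w_1)$. Since $\preorder_{\T}(w_1)<\preorder_{\T}(w_2)$, the node $w_1$ cannot be a proper descendant of $w_2$ in $\T$. So it suffices to rule out the case that $w_1$ and $w_2$ are incomparable in $\T$. Assume this case, so that $z\neq w_1,w_2$ and $z$ precedes both $w_1$ and $w_2$ in the preorder of $\T$.

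\textbf{Step 1: $z\in V_j\setminus\{w\}$.} By the definition of a search tree, $w_1$ and $w_2$ lie in distinct connected components of $T\setminus\{z\}$, so $z\in P_T(w_1,w_2)$. By Lemma~\ref{lem:steiner-closed1}, $P_j=T[V_j]$ is a path containing $w_1$ and $w_2$, hence $P_T(w_1,w_2)\subseteq V_j$ and therefore $z\in V_j$. The previous claim gives $\preorder_{\T}(w_2)<\preorder_{\T}(w)$. Since $z$ precedes $w_2$, it follows that $z\neq w$. Because $w=\lca_T(w_1,w_2)$, the path $P_T(w_1,w_2)$ passes through $w$. Thus $z$ lies strictly inside one of the two branches, $P_T(w,w_1)\setminus\{w\}$ or $P_T(w,w_2)\setminus\{w\}$. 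In particular $z$ is a proper descendant of $w$ in $T$ and lies in the same child subtree of $w$ as $w_1$ or as $w_2$, respectively.

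\textbf{Step 2: contradict minimality.} In the first case, $z$ lies on the $w_1$ side. Then $\lca_T(z,w_2)=w$, both nodes are in $V_j\setminus\{w\}$, and $\preorder_{\T}(z)<\preorder_{\T}(w_1)<\preorder_{\T}(w_2)$. So the pair $(z,w_2)$ has the same later element as $(w_1,w_2)$ but a strictly earlier first element. This contradicts the lexicographic choice of $(w_1,w_2)$. In the second case, $z$ lies on the $w_2$ side. Then $\lca_T(w_1,z)=w$, and both nodes precede $w_2$ in $\T$. So the later element of the pair $\{w_1,z\}$ precedes $w_2$, again contradicting the choice of $w_2$. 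Hence $w_1$ and $w_2$ are comparable in $\T$, and so $w_1$ is an ancestor of $w_2$, i.e., $w_1\in V'$.

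\textbf{Main obstacle.} The delicate point is the phrase ``first two nodes''. The argument needs a well-defined minimality notion that is strong enough to kill both sub-cases; the lexicographic order (first on $w_2$, then on $w_1$) is the one I would fix. If the paper intends a different reading, only Step 2 must be adjusted; Step 1, which uses only Lemma~\ref{lem:steiner-closed1} and the separation property of search trees, is unaffected.
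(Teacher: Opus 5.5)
Your proof is correct and follows the paper's argument. You assume $z=\lca_{\T}(w_1,w_2)\neq w_1$, exclude $z=w$ using the preceding claim $\preorder_{\T}(w_2)<\preorder_{\T}(w)$, and show that $z\in V_j\setminus\{w\}$ contradicts the ``first two nodes'' choice. You also make explicit two points the paper leaves implicit: $z\in V_j$ follows from $z\in P_T(w_1,w_2)$ and Lemma~\ref{lem:steiner-closed1}, and the subcase with $z$ on the $w_1$ side needs the secondary minimality of $w_1$ in your lexicographic reading (a reading compatible with how the preceding claim uses the minimality of $w_2$).
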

    \begin{proof}[Proof (of claim)]
        Assume for contradiction that this is not the case, i.e., some $z = \lca_{\T}( w_1, w_2) \neq w_1$. Then $\preorder_{\T}(z) < \preorder_{\T}(w_1) < \preorder_{\T}(w_2)$. 
        There are two cases to consider: either $z = w$ or $z \in V_j \setminus \{w\}$. Clearly, it cannot be the first case because we have established that $\preorder_{\T}(w) > \preorder_{\T}(w_2) > \preorder_{\T}(w_1)$. Furthermore, the second case contradicts our choice of $w_1$ and $w_2$ as being the {\em first} two nodes in $V_j \setminus \{w\}$ in the preorder traversal of $\T$ that satisfy $\lca_T( w_1, w_2) = w$. Therefore, $w_1 = \lca_{\T}( w_1, w_2) \in V'$.
    \end{proof}

    Finally, $w$ cannot be in $V'$ because $\preorder_{\T}(w_2) < \preorder_{\T}(w)$, and since $w = \lca_T( w_1, w_2)$, i.e., $w \in P_T( w_1, w_2)$ (see Figure~\ref{figure:lemma6} (b) for an illustration). It follows that $w \in CH_T(V') \setminus V'$ and as was shown earlier, $w$ has degree greater than 2 in $T$, which contradicts the Steiner-closed property of $\T$.
\end{proof}

\begin{figure}[ht]
	\begin{center}
	\includegraphics[scale=0.9]{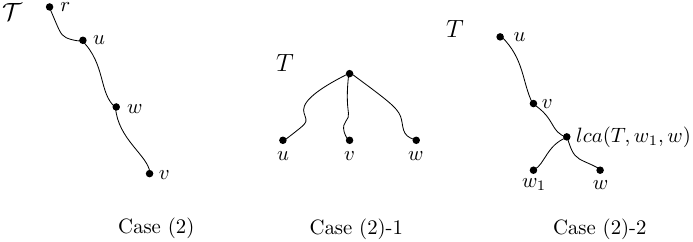}
	\end{center}
	\caption{Supplementary figures for the proof of Lemma~\ref{lem:steiner-closed3}.}
	\label{figure:lemma7}
    \end{figure}

\thirdlemma*
\begin{proof}
    Assume for contradiction that there exist two nodes $u, v \in V_i$ and a node $w \in P_\T( u, v) \cap V_j$ with $i \neq j$.
    Without loss of generality, assume that $u$, $v$, and $w$ are the first three nodes in preorder traversal of $\T$ that satisfy this property, and let $\preorder_{\T}(u) < \preorder_{\T}(v)$. Then $w \notin P_T( u, v)$ by Lemma~\ref{lem:steiner-closed1}.
    We consider the following two cases:
    Case (1) $u \neq \lca_{\T}( u, v)$, and
    Case (2) $u = \lca_{\T}( u, v)$, i.e., $w$ is a descendant of $u$ and an ancestor of $v$ in $\T$.

    In Case (1), let $w' = \lca_{\T}( u,v)$. If $w' \neq w$, then it violates our choice of $u$, $v$, and $w$ as being the {\em first} three nodes in preorder traversal of $\T$ that satisfy the property that $w \in P_\T( u, v) \cap V_j$ with $i \neq j$. Thus, $w = \lca_{\T}( u,v)$. But then, the vertices $u$ and $v$ would lie in different connected components of $T \setminus \{w\}$, and it follows that $w \in P_T( u, v)$, which is a contradiction.

    In Case (2), 
    we further distinguish two following subcases as:
    Case (2)-1, $CH(\{u, v, w\})$ is not a path in $T$, and
    Case (2)-2,  $CH(\{u, v, w\})$ is a path in $T$ (see Figure~\ref{figure:lemma7}).
    In Case (2)-1, without loss of generality, assume that $\preorder_T(u) < \preorder_T(v)$ (otherwise, we simply swap $u$ and $v$ in the remainder of the proof). Then the node $\lca_T( w, v) \neq w$ is contained in neither $P_\T( r, w)$ 
    (otherwise, $w$ would not lie on the path $P_\T( r, v)$), nor $P_\T( w, v)$ 
    (otherwise, we choose $v$ as $\lca_T( w, v)$, which falls into Case (2)-2).
    Thus, the set of nodes  $V'$ forming $P_\T( r, v)$ contains $v$ and $w$ but does not contain 
    $\lca_T( v, w)$, which has at least 3 adjacent vertices in $T$, and $\lca_T( v, w) \in CH_T(V') \setminus V'$. This contradicts to  the Steiner-closed property of $\T$.

    Finally, consider Case (2)-2. Since $P_T( u, v)$ does not contain $w$ by Lemma~\ref{lem:steiner-closed1}, 
    we can distinguish three configurations of $T$:
    (i) $v = \lca_T( u, w)$;
    (ii) $u$ is an ancestor of $v$ and $v$ is an ancestor of $w$; or
    (iii) $w$ is an ancestor of $v$ and $v$ is an ancestor of $u$, according to the configuration of $\T$ in Case (2).
    In Case (i), the set of nodes $V'$ forming $P_\T( r, w)$ contains $u$ and $w$ but does not contain $v$, where $v$ has at least three adjacent vertices in $T$ (note that $v$ cannot be the root of $T$, which is also the root of $\T$). This contradicts the Steiner-closed property of $\T$.

    In Case (ii) and (iii), since $w$ is traversed before $v$ in the preorder traversal of $\T$, there exists at least one node $w_1 \in V_j$ satisfying the following conditions: (a) $w_1$ is traversed before $w$ in the preorder traversal of $\T$ (otherwise, both $u$ and $w$ would belong to either $V_i$ or $V_j$), and (b) the vertex $\lca_T( w_1, w)$, which is neither $w$ nor $w_1$, lies on the path $P_T( v, w)$ by property (1) (if no such $w_1$ exists, $CH_T(\{u, w, w_1\})$ would form a single path in $T$, implying that both $u$ and $w$ belong to either $V_i$ or $V_j$).  
    Without loss of generality, let $w_1$ be the first node in the preorder traversal of $\T$ that satisfies both (a) and (b).         
    Then $P_\T( r, w)$ does not contain any node on the path $P_T( v, w)$ that includes $\lca_T( w_1, w)$ (otherwise, $v$ would not be a descendant of $w$ in $\T$).
    Thus, the node $w_1$ can exist only in either $P_\T( r, u)$ (in which case we choose $w_1$, $u$, and $w$ instead of $u$, $x$, and $v$, respectively) or $P_\T( u, w)$ (in which case we choose $w_1$ instead of $w_2$). Both cases contradict the way in which we chose the nodes $u$, $w$, and $v$. 
\end{proof}

\section{Space Lower Bounds}\label{sec:lb}
In this section we provide space lower bounds for storing general STTs and Steiner-closed STTs.

\subparagraph*{Space lower bound for general STTs. } The following theorem shows that the Theorem~\ref{thm:general_upper} gives an optimal representation of $\T$ up to $O(n)$ additive terms.

\begin{theorem}\label{thm:general_lower}
For any $n$ and $2 \le \ell \le n/2$, there exists a rooted tree $T$ with $n$ vertices and $\ell$ leaves that requires at least $n \log {\ell}-O(\ell \log {(n/\ell)})$ bits to encode a search tree on $T$. 
\end{theorem}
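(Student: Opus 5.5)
We take $T$ to be the comb-shaped tree of Figure~\ref{figure:general_lower}, count the search trees on it, and take the logarithm. Concretely, $T$ consists of a spine path $s_1,\dots,s_\ell$ and, hanging from each spine vertex $s_i$, a pendant path (tooth) $Q_i$ of $m=\lfloor (n-\ell)/\ell\rfloor$ vertices ending in a leaf. The leftover $O(\ell)$ vertices are absorbed into one tooth, and if needed the spine end is adjusted so that the leaf count is exactly $\ell$. The lower bound is then $\log$ of the number of distinct STTs on $T$, because an encoding must distinguish all of them.

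\textbf{Construction of a large family.} We build distinct STTs by interleaving, as follows.
\begin{itemize}
\item Choose a word $w$ over the alphabet $\{1,\dots,\ell\}$ in which letter $i$ appears exactly $|Q_i|$ times.
\item Build a caterpillar-like STT by reading $w$ left to right. On the $t$-th occurrence of letter $i$, make the current node the vertex of $Q_i$ at distance $t-1$ from the leaf end of $Q_i$. Each such vertex is a tooth vertex whose removal detaches only the leaf-side remainder of $Q_i$, and that remainder is already exhausted.
\item After all of $w$ has been consumed, append the spine vertices in a fixed order, for example as a chain $s_1,s_2,\dots$, which is a valid STT on the remaining path.
\end{itemize}
Removing a tooth vertex taken from the leaf end leaves the rest of $T$ connected, apart from nothing, since the part beyond it has already been removed. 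Hence every node created from $w$ has exactly one child, whose component is the remaining tree, and the resulting object is a valid search tree. Distinct words produce distinct STTs, because the preorder label sequence of the path-shaped STT is determined by $w$ and in turn determines $w$. The family therefore has size equal to the multinomial coefficient $\binom{n-\ell}{m,\dots,m}$, up to the adjustment for leftovers.

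\textbf{Estimation.} We lower-bound the multinomial coefficient using Stirling's formula, or equivalently the bound $\binom{N}{m,\dots,m}\ge \ell^{N}/(N+1)^{\ell}$ for $N=m\ell$. Taking logarithms gives $N\log \ell - O(\ell\log N)$. To reach the sharper error term, we use the standard estimate $\log\binom{N}{m,\ldots,m} = N\log\ell - O(\ell\log m)$, obtained from Stirling, where $\ell/2\cdot\log(2\pi m)$ is the dominant correction. Since $N=n-O(\ell)$, we have $N\log\ell \ge n\log\ell - O(\ell\log \ell)$.

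\textbf{Main obstacle.} The $O(\ell\log\ell)$ loss from the spine must be kept within $O(\ell\log(n/\ell))$, and this is the step I expect to be the main obstacle. My first remedy is to include the spine vertices in the interleaving as well, so that all $n$ vertices are counted. Each spine vertex $s_i$ can be removed only after all of $Q_i$ has been used (so that no second component is created), and then it merges with tooth $Q_i$ as an additional leaf-end vertex. In that case the letters are counted exactly with multiplicities summing to $n$ (with $m_i\approx n/\ell$), the count is the full multinomial with $N=n$, and the Stirling bound gives $n\log\ell - O(\ell\log(n/\ell))$ directly.

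Careful handling is needed in two places. First, the spine must remain a connected remainder, which holds if the spine vertices are consumed from one end (for example, $s_\ell$ last), achieved by requiring the teeth to finish in a suitable order. Second, there is the constraint that $\ell\le n/2$, which ensures $m\ge 1$. Forcing the finishing order costs at most a factor of $\ell!$ in the count. If that is too costly, I would instead attach the spine to the tooth tips, so that the tree becomes a spider with $\ell$ legs around one center, and remove the center last. Every interleaving of the legs is then valid, and the only loss is a single vertex.
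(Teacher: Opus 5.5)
Your spider fallback is a complete proof and takes a different route from the paper. The comb-based parts of the proposal do not reach the bound and should be dropped.

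\emph{Why the comb fails.} The main line counts only $\binom{n-\ell}{m,\dots,m}$, i.e.\ about $(n-\ell)\log\ell$ bits. The resulting $\ell\log\ell$ deficit is not $O(\ell\log(n/\ell))$ unless $\log\ell=O(\log(n/\ell))$. At $\ell=n/2$ you get about $\frac{n}{2}\log n$ against a target of $n\log n-O(n)$.

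Your first remedy claims the count becomes the full multinomial with $N=n$; this is false. In a path-shaped STT, $s_i$ can be taken only once its tooth is exhausted \emph{and} it is an end of the remaining spine. So only $2^{\ell-1}$ of the $\ell!$ orders in which letters are completed are admissible. By symmetry, only roughly a $2^{\ell-1}/\ell!$ fraction of the words is realizable, which is again a $\Theta(\ell\log\ell)$ loss. Already for $\ell=3$ with one-vertex teeth, only $60$ of the $90$ words occur, because the middle letter can never be completed first.

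\emph{Comparison with the paper.} The paper's proof is essentially the comb plus your first remedy. It builds path-shaped STTs by repeatedly deleting a degree-one vertex, relabels the vertices of $Q_i$ by $i$, and takes $|N_T|=n!/((n/\ell)!)^{\ell}$, i.e.\ it treats every word as realizable. Your finishing-order objection applies to that count directly, so as written that argument supports the stated bound only when $\log\ell=O(\log(n/\ell))$.

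The spider removes the obstruction. Peeling each leg from its tip and taking the center last never disconnects the remainder. Hence the path-shaped STTs are in bijection with words having $m_i$ copies of letter $i$, and the count is exactly $\binom{n-1}{m_1,\dots,m_\ell}$. This works over the whole range $2\le\ell\le n/2$ at the price of a single vertex.

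\emph{Two details are needed to finish.}

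First, take balanced legs, $m_i\in\{\lfloor (n-1)/\ell\rfloor,\lceil (n-1)/\ell\rceil\}$. All are at least $1$ since $\ell\le n/2$, so there are exactly $\ell$ leaves. Do not dump the $r<\ell$ leftover vertices into one leg, as you did for the comb. With $q=\lfloor (n-1)/\ell\rfloor$, that loses $\log\frac{(q+r)!}{q!\,(q+1)^r}$ bits relative to the balanced choice, which is $\Theta(\ell\log\ell)$ when $q=O(1)$ and $r=\Theta(\ell)$.

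Second, with $N=n-1$, Stirling together with $\sum_i m_i\log m_i\le N\log(N/\ell)+\ell\log e$ gives $\log\binom{N}{m_1,\dots,m_\ell}\ge N\log\ell-O(\ell(1+\log(N/\ell)))$. The error term is $O(\ell(1+\log m))$ rather than your $O(\ell\log m)$; the difference matters at $m=1$. Since $n/\ell\ge 2$, it is still $O(\ell\log(n/\ell))$. The missing center costs only $\log\ell\le\ell$, which is also absorbed.
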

\begin{proof}
      Let $t = \floor{n / \ell}$. We then construct a `comb-shaped' tree $T$ using $n$ vertices through the following steps. For simplicity, we first assume that $n$ is a multiple of $t$ and consider the general case later.
	First, construct $\ell$ distinct paths $Q_1, \dots, Q_{\ell}$, where for each $i \in [\ell]$,  
	the set of vertices in $Q_i$ is $\{(i-1)t+1, \dots, it\}\}$ and the set of edges in $Q_i$ is  
	$\{(j, j+1) \mid (i-1)t+1 \le j < it\}$.  
	Next, add $(\ell-1)$ additional edges $\{((i-1)t+1, it+1)\}$ for all $1 \le i < \ell$ to connect the $\ell$ paths  
	(see Figure~\ref{figure:general_lower} for an example).  
	Finally, relabel all the vertices in $Q_i$ as $i$.  
    From the construction, we can treat $T$ as a rooted tree with $\ell$ leaves by choosing any vertex with degree at least two.
	
	Now let $N_T$ be the set of all distinct search trees $\T$ on $T$ constructed as follows.
	Initially, the root $r$ of $\T$ is chosen as one of the vertices in $T$ with degree $1$.
	Then, recursively, the root of $\T \setminus \{r\}$ is chosen as one of the vertices in $T \setminus \{r\}$ with degree $1$.  
	Then $N_T$ is the set of all paths of size $n$, where exactly $t$ nodes among them are labeled as $i$ for each $i \in [\ell]$.  
    Therefore, the size of $N_T$ is $\frac{n!}{((n/\ell)!)^{\ell}}$, which implies that at least
    $\log |N_T| \ge n \log \ell - O(\ell \log {(n/\ell)})$ bits are necessary to encode a search tree in $N_T$ by Stirling’s approximation\footnote{$\log {\frac{n!}{((n/\ell)!)^{\ell}}} \approxeq n \log n - cn +O(\log n) - \ell((n/\ell)\log (n /\ell) - c(n/\ell) + O(\log (n/\ell))$ for some constant $c = \log e$}. Note that this is at least $n \log \ell - O(n)$ when $\ell = \Theta(n)$.

    Finally, consider the case when $n$ is not a multiple of $t$.
    In this case, let $n' = \floor{n/l} \cdot l$. i.e., the largest value at most $n$ that is a multiple of $t$ or $\ell$.
    Then $n' \log \ell - O(\ell \log (n'/\ell)) \ge (n - \min(\ell, n/\ell)) \log \ell - O(\ell \log (n/\ell))$. When $\ell \le n/\ell$, we have $\ell \log \ell \le \ell \log (n/\ell)$; otherwise, the term $\ell \log (n/\ell)$ dominates the term $\frac{n}{\ell} \log (n/\ell)$. Thus, the bound is still at least $n \log \ell - O(\ell \log (n/\ell))$.
\end{proof}

\subparagraph*{Space lower bound for Steiner-closed STTs. } The following theorem that shows that the representation of Theorem~\ref{thm:steiner} gives an optimal representation of $\T$ up to lower-order additive terms.

\begin{theorem}\label{thm:steiner_lower}
	For any tree $T$ with $n$ nodes, at least  $2n - O(\log {n})$ bits are necessary to encode Steiner-closed search tree on $T$.  
\end{theorem}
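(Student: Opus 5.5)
The plan is to exhibit a reference tree on which there are as many Steiner-closed STTs as there are binary trees on $n$ nodes, and then count. I read the statement as a worst-case lower bound: there is a tree $T$ with $n$ nodes for which $2n-O(\log n)$ bits are needed. This is how it is described in \Cref{sec:summary}. Taken literally for \emph{every} tree it cannot hold. For a star, for example, the Steiner-closed condition forces that once two leaves have been chosen on a root-to-node path, the centre must come next. So there are only $O(n^2)$ Steiner-closed STTs on a star, and $O(\log n)$ bits suffice.

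I take $T$ to be a path on $n$ vertices. The first step is to show that every STT $\T$ on a path is Steiner-closed. For any vertex set $S$, the convex hull $CH_T(S)$ is the subpath between the two extreme vertices of $S$. Every vertex of $CH_T(S)$ other than these two endpoints has degree exactly two in $T[CH_T(S)]$, and the endpoints lie in $S$. Hence every vertex of $CH_T(S)\setminus S$ has degree two, so every $S$ is Steiner-closed, and in particular so is the set of nodes on each root-to-node path of $\T$.

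The second step is the correspondence from the proof of Lemma~\ref{lem:path}. Root $T$ at an endpoint. Then an STT on $T$ is exactly a BST whose inorder sequence is the order of the vertices along the path. A node with a single child has that child designated left or right according to which side of the node it lies on. This gives a bijection between STTs on $T$ and binary trees with $n$ nodes.

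The count is therefore the Catalan number $C_n=\frac{1}{n+1}\binom{2n}{n}$. Using $\binom{2n}{n}\ge 4^n/(2n+1)$, any encoding must use at least
$\log C_n \ge 2n-\log\bigl((n+1)(2n+1)\bigr) = 2n-O(\log n)$ bits.

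The only step that needs real care is the bijection. I have to check that sibling-order and single-child conventions make the map from STTs to binary trees injective and surjective: distinct binary shapes, labelled in inorder along the path, must give distinct valid STTs. Each subtree must cover a contiguous subpath, so that the connected-component condition holds. This is the standard recursive argument on the root's position.
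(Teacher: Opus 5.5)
Your proposal is correct and follows the paper's proof essentially step for step: take $T$ to be a path, observe that every STT on it is Steiner-closed, use the bijection with binary trees from the proof of Lemma~\ref{lem:path}, and conclude $\log C_n \ge 2n - O(\log n)$. Your remark that the statement only holds as a worst-case (existential) bound is also right, since on a star only $O(n^2)$ STTs are Steiner-closed, and the paper's proof, which uses only the path, indeed establishes just that existential version.
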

\begin{proof}
	Consider a path $T$ of size $n$.
	Every search tree on $T$ is Steiner-closed since the maximum degree of the vertices in $T$ is $2$.
	Moreover, from the argument used in the proof of Lemma~\ref{lem:path}, search trees on $T$ are in one-to-one correspondence with standard BSTs on an array of size $n$. Hence, there are exactly $C_n$ Steiner-closed search trees on $T$, where $C_n$ is the $n$-th Catalan number.
	Therefore, at least $\log C_n = 2n - O(\log n)$ bits are necessary to encode a Steiner-closed tree.
\end{proof}

\section{Data Structure for $\access$}\label{sec:access}
In this section we present asymptotically optimal-space data structures for a search tree $\T$ on $T$ to support $\access(k)$ efficiently, which returns the label of the node in $\T$ with preorder number $k$. We denote this node by $u_k$. 

As in the previous sections, we consider two cases:
when $\T$ is a general search tree and when $\T$ is a Steiner-closed search tree.
For both cases, we assume that $T$ is a rooted and ordered tree, defined in the same way as in the previous sections, and that basic navigational operations on $T$ can be performed in $O(1)$ time using $2n + o(n)$ bits of extra space~\cite{DBLP:journals/talg/NavarroS14}
(see Table~1 in \cite{DBLP:journals/talg/NavarroS14} for the list of supported operations).
We also define the paths $P_1, \dots, P_{\ell}$ on $T$ and their corresponding search trees $\T_1, \dots, \T_{\ell}$ as in the previous sections.

Before presenting the data structures, we briefly describe the additional auxiliary data structures used in this paper.

\subparagraph{Rank and Select. }
Consider an array $S[1,n]$ over an alphabet $\Sigma$ of size $\sigma$.  
Given a character $c \in \Sigma$ and an index $i \in [n]$,  
the \textit{rank} operation returns the number of occurrences of $c$ in the prefix $S[1,i]$,  
and the \textit{select} operation returns the position of the $i$-th occurrence of $c$ in $S$.  
Golynski, Munro, and Rao~\cite{DBLP:conf/soda/GolynskiMR06} showed that there exists 
a space-efficient data structure for $S$ that supports these operations efficiently:

\begin{lemma}[\cite{DBLP:conf/soda/GolynskiMR06}]\label{lem:rankselect}
	There exists an $(n \log \sigma + o(n \log \sigma))$-bit data structure that supports 
	rank and select operations on $S$ in $O(\log \log \sigma)$ and $O(1)$ time, respectively.  
	The data structure also supports accessing any position in $S$ in $O(\log \log \sigma)$ time.
\end{lemma}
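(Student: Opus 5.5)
Since this lemma is quoted from Golynski, Munro, and Rao, the plan is to reconstruct their chunking construction rather than derive it from anything earlier in this paper. First split $S$ into $\ceil{n/\sigma}$ chunks of length $\sigma$ each (assume $\sigma \le n$; otherwise a single chunk suffices). Globally, for each character $c$ we store a bit vector $B_c$: for each chunk in order, a run of $1$s, one per occurrence of $c$ in that chunk, followed by a single $0$. Equivalently, we store one concatenated vector $B$ of length at most $2n + \sigma$ with constant-time rank/select and $o(n)$ bits of overhead. Then $\mathrm{rank}_c$ and $\mathrm{select}_c$ on $S$ reduce, in $O(1)$ time, to two steps. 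First, we find the chunk containing the $i$-th occurrence of $c$ (resp. count the occurrences of $c$ in the chunks before position $i$) with select/rank on $B$. Second, we answer a rank or select query local to one chunk. So the whole problem reduces to a chunk of length $\sigma$ over an alphabet of size $\sigma$.

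Inside a chunk $C[1,\sigma]$, we store two things.
\begin{itemize}
\item A bit vector $X$ of length $2\sigma$ encoding, in unary, the number of occurrences of each character in the chunk.
\item The permutation $\pi$ that lists the positions of the chunk sorted stably by character, so that all positions holding $c$ appear consecutively and in increasing order.
\end{itemize}
Then $\mathrm{select}_c(j)$ in the chunk is $\pi(\text{start}_c + j)$, where $\text{start}_c$ comes from select on $X$; this takes $O(1)$ time. Access $C[p]$ is obtained by computing $\pi^{-1}(p)$ and then a rank on $X$ to identify the character block containing it. We store $\pi$ explicitly in $\sigma\log\sigma$ bits, together with the Munro et al. succinct-permutation shortcuts: every $t$-th element of each cycle gets a back pointer. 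This costs $O((\sigma\log\sigma)/t)$ extra bits and lets us compute $\pi^{-1}$ in $O(t)$ time. Choosing $t=\log\log\sigma$ gives access in $O(\log\log\sigma)$ time and $o(\sigma\log\sigma)$ extra bits per chunk.

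The main obstacle is $\mathrm{rank}_c$ inside a chunk. It amounts to a predecessor search for position $p$ among the sorted positions $\pi(\text{start}_c),\dots,\pi(\text{end}_c)$, which are stored only implicitly. A plain binary search would cost $O(\log\sigma)$, which is too slow. The plan is to sample every $\log\sigma$-th position of each character's block and build a y-fast-trie (or van Emde Boas–style) predecessor structure over the samples. There are only $\sigma/\log\sigma$ samples per chunk, so this uses $O(\sigma)$ bits, and it locates the right sample interval in $O(\log\log\sigma)$ time. A binary search over the at most $\log\sigma$ positions in that interval, each evaluated through $\pi$ in $O(1)$ time, then costs another $O(\log\log\sigma)$.

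Summing over chunks, the space is $n\log\sigma + o(n\log\sigma) + O(n)$. The $O(n)$ term is absorbed into $o(n\log\sigma)$ when $\sigma=\omega(1)$. For constant $\sigma$, we instead use one constant-time rank/select bit vector per character, which takes $O(n)=n\log\sigma+o(n\log\sigma)$ bits as constants go and makes all three operations $O(1)$.
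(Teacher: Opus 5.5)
Your construction is the Golynski--Munro--Rao structure that the paper cites; the paper gives no proof of its own. The ingredients match:
\begin{itemize}
\item chunks of length $\sigma$ linked by a global unary count vector;
\item a per-chunk stable-sort permutation $\pi$ with Munro et al.'s back pointers ($t=\log\log\sigma$);
\item select via $\pi$ and access via $\pi^{-1}$;
\item rank via a predecessor search over every $(\log\sigma)$-th occurrence, followed by a binary search evaluated through $\pi$.
\end{itemize}
For $\omega(1)=\sigma\le n$ it is correct, but two details need to be explicit. First, the predecessor structure must be the hashing-based y-fast trie. A plain van Emde Boas tree uses space linear in its universe, which already exceeds your $O(\sigma)$-bit budget per chunk. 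Second, you must be able to locate character $c$'s samples and know each sample's occurrence index. One y-fast trie per chunk over composite keys $c\sigma+p$, with the index stored as satellite data, does both in $O(\sigma)$ bits and $O(\log\log\sigma)$ time.

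The step that fails as written is the constant-$\sigma$ fallback. One rank/select bit vector per character costs $\sigma n+o(n)$ bits. For any fixed $\sigma\ge 2$ this exceeds $n\log\sigma+o(n\log\sigma)=n\log\sigma+o(n)$, so ``as constants go'' does not justify it. Reading the little-$o$ with $\sigma\to\infty$ avoids the problem, but that is the regime your main construction already covers. The GMR structure itself meets the bound only there, since its global vector alone has about $2n$ bits.

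To get the literal bound for constant $\sigma$:
\begin{itemize}
\item pack blocks of $b=\Theta(\log_\sigma n)$ symbols into $\lceil b\log\sigma\rceil$-bit codes, for $n\log\sigma+o(n)$ bits in total;
\item answer access, rank and select in $O(1)$ time with universal lookup tables;
\item add, for each character, the standard two-level rank directory and sampled select structure, all in $o(n)$ extra bits.
\end{itemize}

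Likewise, your aside that a single chunk suffices when $\sigma>n$ is false for $\sigma\gg n$, because the unary vector $X$ alone then has at least $\sigma$ bits. Simply assume $\sigma\le n$.

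Neither point affects this paper. It applies the lemma only with $\sigma=\ell\le n$ or $\sigma\le 3$. For constant alphabets it needs only an $O(n)$-bit bound, and it even charges $\lceil\log 3\rceil$ bits per ternary symbol.
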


\subparagraph{Balanced parenthesis representation. } Given a balanced parentheses sequence (BP), the \textit{findopen} operation returns the position of the matching open parenthesis when the position of a closed parenthesis is given. 
Analogously, one can define a \textit{findclose} operation. 
Munro and Raman~\cite{munro2001succinct} showed that there exists a data structure that supports these operations efficiently:

\begin{lemma}[\cite{munro2001succinct}]\label{lem:bp}
	Given a BP of size $n$, there exists an $o(n)$-bit auxiliary structure that can answer both findopen and findclose in $O(1)$ time.
\end{lemma}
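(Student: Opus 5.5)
This is the classical structure of Munro and Raman~\cite{munro2001succinct}. I would re-derive it with the standard pioneer-based two-level scheme. I describe findclose; findopen is symmetric, obtained by reversing the sequence and swapping the parenthesis types.

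\textbf{Blocks, excess and near matches.} Cut the BP sequence $B$ of length $n$ into blocks of size $b=\frac12\log n$. Define the excess $E(i)$ as the number of open minus the number of closed parentheses in $B[1,i]$. A standard $o(n)$-bit rank structure on $B$ gives $E(i)$ in $O(1)$ time. Build a universal table, indexed by a block content and an offset, that returns the in-block match if it exists. It also returns the positions where a prescribed relative excess is first reached, scanning forward or backward from the offset. The table has $2^{b}\cdot b$ entries of $O(\log b)$ bits each, i.e.\ $O(\sqrt n\log n\log\log n)=o(n)$ bits. Hence a \emph{near} parenthesis, one whose match lies in its own block, is handled in $O(1)$ time.

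\textbf{Pioneers and far matches.} Call an open parenthesis \emph{far} if its match lies in another block. Call a far parenthesis $p$ a \emph{pioneer} if no far parenthesis preceding $p$ in its block has its match in the same block as $p$'s match.

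The key claim is that there are only $O(n/b)$ pioneers. Consider the graph whose vertices are the blocks and whose edges are the pioneer pairs. It has no parallel edges by definition of pioneers, and it has no crossing edges because parentheses nest properly. It is therefore outerplanar, so it has at most $2(n/b)$ edges.

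For a far $p$, let $q$ be the last pioneer at or before $p$ in its block; the table can locate it. By nesting, $q$'s match lies in the same block as $p$'s match. Inside that block we then search for the position $j$ with $E(j)=E(p)-1$ that occurs first after the start of the block (equivalently, the last such position before $\mathrm{match}(q)$, scanning backward). This uses $E(p)$ from rank and one table lookup.

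\textbf{Recursion to reach $o(n)$ bits.} It remains to store the match of each pioneer. Mark pioneers in a bitvector of length $n$ with only $O(n/b)$ ones. A compressed rank/select bitvector stores it in $O\big((n/\log n)\log\log n\big)+o(n)=o(n)$ bits. The subsequence of pioneers together with their matches is itself balanced, of length $n'=O(n/\log n)$, and matches of pioneers are matches in this subsequence. Apply the same scheme to it once more, with blocks of size $\Theta(\log n)$. The second-level pioneers number $O(n/\log^2 n)$, so their matches can be stored explicitly in $O(n/\log n)$ bits. A findclose therefore takes a constant number of table lookups, rank/select calls and one explicit lookup, hence $O(1)$ time, with $o(n)$ bits in total.

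I expect the main obstacle to be the pioneer count, i.e.\ the outerplanarity and no-parallel-edge argument, together with checking that the in-block search after jumping via $q$ really lands on $\mathrm{match}(p)$. The latter reduces to the observation that every parenthesis strictly between $q$ and $p$ in the block nests inside $q$'s pair, so the excess search is confined to the correct block.
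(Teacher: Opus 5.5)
Your proposal reconstructs the standard pioneer technique, due to Jacobson and used by Munro and Raman, that underlies the cited result. The paper gives no proof of its own beyond the citation, so this is the expected route. The overall structure is sound:
\begin{itemize}
\item the outerplanarity count of pioneers;
\item the fact that the last pioneer $q$ at or before a far $p$ has its match in the same block as $\mathrm{match}(p)$, since far open parentheses of a block have decreasing matches, so each target block corresponds to a contiguous run headed by a pioneer;
\item one level of recursion on the pioneer family.
\end{itemize}

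Three details need repair.

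First, the table cannot ``locate'' $q$. Pioneer status depends on where matches lie outside the block, so it is not a function of the block content. Instead, take $q=\mathrm{select}_1(\mathrm{rank}_1(p))$ on the pioneer bitvector. This lands in $p$'s block because the first far parenthesis of each block is a pioneer.

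Second, your parenthetical ``equivalently, the last such position before $\mathrm{match}(q)$'' is false. Take $b=4$ with blocks \texttt{()((} and \texttt{)())}, and $q=3$, $p=4$. Then $\mathrm{match}(p)=5$ and $\mathrm{match}(q)=8$, but position $7$ also has excess $E(p)-1$. Only your forward rule is correct: every position of the target block before $\mathrm{match}(p)$ lies strictly inside $p$'s pair, so it has excess at least $E(p)$. The lookup therefore needs the excess at that block's boundary in addition to $E(p)$. The table must also be indexed by the target relative excess, which still costs $o(n)$ bits.

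Third, the compressed bitvector must mark the whole pioneer family, that is, pioneers and their matches, not just pioneers. You need rank on it to map $q$ into the subsequence and select to map the match found there back. Its weight is still $O(n/b)$. Keep a separate bitvector for the pioneers alone, for the first repair.

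With these fixes, your space and time bounds go through.
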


Given a rooted tree $T$, a BP of $T$ is a representation of $T$ using a BP.  
Specifically, starting from an empty sequence, we perform an Euler tour on $T$ (if $T$ is an ordered tree, we visit the subtrees from left to right).  
For any vertex $v \in T$, append `(' when starting the tour of $T(v)$, and append `)' when finishing the tour of $T(v)$. Clearly, if $T$ has $n$ vertices, the size of the BP of $T$ is always $2n$ bits and the BP of $T$ can be constructed in $O(n)$ time.
Given a BP of $T$, one can support a wide range of tree navigational queries using auxiliary structures of $o(n)$ bits in total~\cite{DBLP:journals/talg/NavarroS14}.

\subsection{Data Structure for General Search Trees on Trees}
To support the $\access$ operation on $\T$, we build upon the representation of Theorem~\ref{thm:general_upper}  
(the labeled tree $\T'$ together with the representations of $\T_1, \dots, \T_{\ell}$ using Lemma~\ref{lem:path}),  
with the following modifications:

\begin{enumerate}[label={(\arabic*)}]
	\item In addition to the ($\ceil{n \log \ell}+2n$)-bit representation of $\T'$ (an array $I$ together with the BP representation of $\T'$),  
	we maintain an $o(n)$-bit auxiliary structure to support tree navigational queries on $\T'$ in $O(1)$ time~\cite{DBLP:journals/talg/NavarroS14}.  
	Moreover, we maintain $I$ using the data structure of Lemma~\ref{lem:rankselect},  
	requiring $o(n \log \ell)$ extra bits to support rank and select operations on $I$.
	
	\item We explicitly store the tree structures $\T_1, \dots, \T_{\ell}$ using Lemma~\ref{lem:path} and merge their representations into a single bit string of total size $2n$ bits (here, we do not delete the first and last parentheses as in Lemma~\ref{lem:path}).

	In addition, to enable random access to the bit string, we modify the bit string by inserting a special marker before the first bit of each new tree representation, and maintain the resulting sequence using the data structure of Lemma~\ref{lem:rankselect}.
	Since the modified string forms a ternary string of size at most $2n+\ell$, the total required space is $(2n+\ell)\ceil{\log 3} = O(n)$ bits.	
     
	\item For each search tree $\T_i$,  
	we maintain an $o(|\T_i|)$-bit auxiliary structure to support \textit{inorder rank} queries,  
	i.e., given the preorder number of a node, return its inorder number in $O(1)$ time~\cite{DBLP:journals/mics/DavoodiRS17}.  
	Overall, these auxiliary structures require $o(n)$ bits in total. 
\end{enumerate}

To answer $\access(k)$, we proceed as follows.  
First, we retrieve $I[k] = j$ in $O(\log \log \ell)$ time using the data structure of Lemma~\ref{lem:rankselect}.  
From the construction of $\T_j$, we compute $\preorder_{\T_j}(u_k)$ in $O(\log \log \ell)$ time using rank on $I$.  
Then, we compute $i_k=\inorder_{T_j}(u_k)$ using the inorder-rank structure.
Since $P_j$ ends at the $j$-th leaf $l_j$ of $T$, the vertex corresponding to
$u_k$ is the ancestor of $l_j$ at depth
$\depth_T(l_j)-|P_j|+i_k$.
Since $|P_j|$ can also be obtained in $O(\log\log\ell)$ time using rank on $I$,
$\access(k)$ can be answered in $O(\log\log\ell)$ time using
\texttt{leaf\_select} and level-ancestor queries on $T$~\cite{DBLP:journals/talg/NavarroS14}.
We summarize the result in the following theorem.

\begin{theorem}\label{thm:ds_general}
	There exists a data structure of size $n \log \ell + O(n) + o(n \log \ell)$ bits  
	that supports $\access$ on a search tree $\T$ on $T$ in $O(\log \log \ell)$ time,  
	where $n$ and $\ell$ are the number of vertices and leaves in $T$, respectively.
\end{theorem}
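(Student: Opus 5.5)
The proof amounts to checking that the three components listed before the statement fit the space bound, and that the query procedure described just above is correct and runs in $O(\log\log\ell)$ time.

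\textbf{Space.} I would add the components in order.

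Component (1) has three parts:
\begin{itemize}
\item the BP of $\T'$, which takes $2n$ bits;
\item the $o(n)$-bit navigation structure of Navarro and Sadakane;
\item the array $I$ stored with Lemma~\ref{lem:rankselect}, which takes $n\log\ell+o(n\log\ell)$ bits. The $\ceil{\cdot}$ rounding is absorbed into the lower-order terms.
\end{itemize}
Component (2) is a ternary string of length at most $2n+\ell\le 3n$, so it costs $O(n)$ bits even with the $O(1)$-time access structure of Lemma~\ref{lem:rankselect}, since $\sigma=3$ is constant. Component (3) is the inorder-rank structure over all $\T_i$, which totals $o(n)$ bits. The reference tree $T$ itself is accessible and is not counted, as stipulated throughout. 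The extra $2n+o(n)$ bits for navigating $T$ are part of $O(n)$ in any case. Summing gives $n\log\ell+O(n)+o(n\log\ell)$.

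\textbf{Correctness.} Let $u_k$ be the node with preorder number $k$ in $\T'$, and let $j=I[k]$ be the index of its path.

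The first key claim is that the preorder of $\T_j$ is the restriction of the preorder of $\T$ to the nodes in $P_j$. This holds because $\T_j$ is obtained by edge contractions that preserve the ancestor relation and the left-to-right order among the $P_j$-nodes; Lemma~\ref{lem:tit} is what guarantees the contracted tree is a binary tree with consistent sides. Hence $\preorder_{\T_j}(u_k)=\mathrm{rank}_j(I,k)$.

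Next, locate $\T_j$ inside the concatenated string of component (2): the start of its block is a select of the $j$-th marker. From there, the inorder-rank structure gives $i_k=\inorder_{\T_j}(u_k)$. By Lemma~\ref{lem:path}, the inorder of $\T_j$ coincides with the top-down order of $P_j$ in $T$. Here $|P_j|=\mathrm{rank}_j(I,n)$, or equivalently the size of the block, which a select on the marker string also gives. Since $P_j$ ends at $l_j$ and its vertices occupy consecutive depths, the answer is the level ancestor of $l_j$ at depth $\depth_T(l_j)-|P_j|+i_k$. Here $l_j$ is found by \texttt{leaf\_select}$(j)$ on $T$.

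\textbf{Time.} The query performs a constant number of access and rank operations on $I$, each $O(\log\log\ell)$, plus $O(1)$-time select, inorder-rank, leaf-select and level-ancestor operations. The total is $O(\log\log\ell)$.

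\textbf{Main obstacle.} The delicate point is the preorder-consistency claim, together with the convention for single children in $\T_j$. A single child must be placed left or right according to $T$-depth, so that the inorder rank really equals the depth offset along $P_j$. I would argue this directly from Lemma~\ref{lem:tit}(2): ancestors-side nodes of $P_j$ lie in the left subtree and descendants-side nodes in the right.
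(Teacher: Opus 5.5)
Your proposal is correct and follows the paper's proof essentially step for step. You use the same three components: the rank/select-enabled array $I$ with the BP of $\T'$, the marked ternary concatenation of the shapes of the $\T_i$, and the per-$\T_i$ inorder-rank structures. Your query is also the same: read $j=I[k]$, get $\preorder_{\T_j}(u_k)$ and $|P_j|$ by rank on $I$, convert to $i_k$, and return the level ancestor of $l_j$ at depth $\depth_T(l_j)-|P_j|+i_k$. Your added justification, that preorder in $\T_j$ restricts preorder in $\T$ (via Lemma~\ref{lem:tit}) and that the single-child convention makes inorder match depth along $P_j$, fills in details the paper leaves implicit.
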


\begin{remark}
	The data structure of Theorem~\ref{thm:ds_general} supports $\access$ in $O(1)$ time when $\ell = O(\mathrm{polylog}(n))$ by using multiary wavelet trees, which can access any string of size $n$ over an alphabet of size $\ell$ in $O(\log \ell / \log \log n)$ time while supporting rank and select operations within the same time.    
\end{remark}

\subsection{Data Structure for Steiner-closed Search Trees on Trees}
In this section we present an $O(n)$-bit data structure for a Steiner-closed search tree $\T$ on $T$ that supports $\access$ in $O(1)$ time.

We first briefly describe the overall idea of the data structure.  
Unlike the general case, the representation of Theorem~\ref{thm:steiner} does not store the tree structure of $\T$,  but only the tree structures of $\T_1, \dots, \T_\ell$.  
Hence, the main challenge is to identify the path containing the query node  
without fully reconstructing $\T$, while using at most $O(n)$ extra bits.
To address this, we introduce an \textit{extended search tree} of size at most $2n$,  
which includes up to $n$ additional \textit{copied nodes} from $\T$.  
Using the extended search tree together with $O(n)$-bit auxiliary structures,  
we can determine the path that contains $u_k$ in $O(1)$ time by exploiting the property of Lemma~\ref{lem:steiner_ds1}. 
Furthermore, for any node in the extended search tree,  
we can find its corresponding node in $\T$ in $O(1)$ time using $O(n)$-bit auxiliary structures,
which allows us to support $\access$ in $O(1)$ time.

Let $P_i = P_T( s_i, l_i)$ be  
the path that contains the vertex $u_k$.  
Then we can answer $\access(k)$ by:  
(1) finding the vertices $s_i$ and $l_i$ (i.e., find their preorder numbers in $T$),  
(2) compute $\inorder_{\T_i}(u_k)$, and  
(3) returning $u_k$ (i.e., the label of the node) by returning the label of the ancestor of $l_i$ at depth  $\depth_T(s_i) + \inorder_{\T_i}(u_k) - 1$ in $T$.  
In this section we first define an extended search tree of $\T$ and describe the data structures required for steps (1) and (2). Note that step (3) can be performed in $O(1)$ time  
using tree navigational queries on $T$.

\subparagraph*{Extended search tree $\T^e$ of $\T$.}
We first introduce a lemma that will be useful throughout this section. 

\begin{lemma}\label{lem:steiner_left}
	For any tree $T$ and its Steiner-closed search tree $\T$ on $T$,  
	suppose $p_L$ is a child of $p$ in $\T$.  
	If $\preorder_T(p_L) < \preorder_T(p)$, then $p_L$ is an ancestor of $p$ in $T$,  
	which implies that $p_L$ is always the leftmost child of $p$ in $\T$.
\end{lemma}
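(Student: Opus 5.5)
Set $H = P_{\T}(r, p_L)$, the set of nodes on the root-to-$p_L$ path in $\T$; it contains $r$, $p$ and $p_L$. Recall that $r$, the root of $\T$, is also the root of $T$. Assume for contradiction that $\preorder_T(p_L) < \preorder_T(p)$ but $p_L$ is not an ancestor of $p$ in $T$. We will exhibit a vertex of $CH_T(H)\setminus H$ whose degree in $T[CH_T(H)]$ is at least $3$, contradicting Steiner-closedness.

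\textbf{Locating $z$.} Let $z = \lca_T(p_L, p)$. The vertex $p$ cannot be an ancestor of $p_L$ in $T$, since that would give $\preorder_T(p) < \preorder_T(p_L)$. By assumption $p_L$ is not an ancestor of $p$ either. Hence $z \notin \{p, p_L\}$, and $p$ and $p_L$ lie in subtrees of distinct children of $z$. The order $\preorder_T(p_L) < \preorder_T(p)$ only says that $p_L$'s branch lies to the left of $p$'s branch.

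\textbf{Showing $z \notin H$.} The nodes of $H$ are exactly those removed before $p_L$ while descending in $\T$. Since $p_L$ is a child of $p$, every proper ancestor of $p_L$ in $\T$ is a proper ancestor of $p$, or is $p$ itself. If some proper $\T$-ancestor $a$ of $p$ equalled $z$, then $p$ and $p_L$ would lie in different components of $T\setminus\{z\}$ and so could not both be in $\T(a)$'s child subtree that contains them. More precisely, $p$ and $p_L$ could not be in the same subtree of $\T$ below $a$. This is impossible, because $p_L \in \T(p)$. Therefore $z \notin H$.

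\textbf{Degree of $z$.} The vertex $z$ lies on $P_T(p, p_L)$, so $z \in CH_T(H)$. In $T[CH_T(H)]$, $z$ has a neighbor toward $p$ and a neighbor toward $p_L$, in two distinct child subtrees. If $z \neq r$, the path from $z$ to $r \in H$ leaves through $z$'s parent, giving a third neighbor. Thus $z$ has degree at least $3$ and lies in $CH_T(H)\setminus H$, a contradiction. If $z = r$, then $z \in H$ directly, which is also a contradiction. Either way, $p_L$ must be an ancestor of $p$ in $T$.

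\textbf{Leftmost child.} The remaining claim is the main obstacle. In $T\setminus\{p\}$ the component containing the ancestor $p_L$ also contains $r$, which has the smallest preorder number. Hence the child subtree of $p$ in $\T$ containing $p_L$ is the leftmost one. This needs that $r$ has not already been removed, i.e.\ that $r$ lies in $p$'s current component. That fails when $r$ is a proper $\T$-ancestor of $p$, which is the usual case, since $r$ is the root of $\T$. So I would instead argue with the minimum-preorder vertex of each component of $p$'s subtree region. The component containing $p_L$ contains the part of the $T$-path from $p$ upward, since $p_L$ is an ancestor of $p$. Every other component of $T\setminus\{p\}$ inside the region lies in the subtree of $p$ in $T$. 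Each such component therefore has all its preorder numbers greater than $\preorder_T(p)$, while $\preorder_T(p_L) < \preorder_T(p)$. Hence the component containing $p_L$ has the smallest minimum preorder, and $p_L$'s subtree is the leftmost child of $p$ in $\T$.
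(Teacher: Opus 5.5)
Your proof is correct and follows the paper's argument. You take $z=\lca_T(p_L,p)$, show that $z$ cannot lie on $P_{\T}(r,p_L)$ (otherwise it would separate $p$ from $p_L$ above $p$), and then, using that $r$ is also the root of $T$, exhibit $z$ as a vertex of the convex hull of that path with degree at least $3$, which contradicts Steiner-closedness. You also make explicit two points the paper leaves implicit, namely the degree count for $z$ and the leftmost-child consequence via comparing minimum preorder numbers of the components of $\T(p)\setminus\{p\}$; the false start about $r$ in your last paragraph can simply be deleted.
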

\begin{proof}
	Assume $\preorder_T(p_L) < \preorder_T(p)$,  
	but $p_L$ is not an ancestor of $p$ in $T$.  
	Let $v = \lca_T(p_L, p)$.  
	Then $v \neq p_L$, and the path $P_\T( r, p_L)$ contains both $p$ and $p_L$ but not $v$; 
	otherwise, $p$ would not be included in $P_\T( r, p_L)$.  
	Thus, the set of nodes in $P_\T( r, p_L)$ would not form a Steiner-closed set,  
	leading to a contradiction.
\end{proof}

Now we define an \textit{extended search tree} $\T^e$ of $\T$ as follows.  
For each node $p \in \T$, let $p_L$ be its leftmost child,  
and let $C(p_L)$ denote the set of children of $p_L$.  
Then $\T^e$ is constructed from $\T$ by performing the following procedure for all nodes in $\T$:
\begin{enumerate}
	\item If $\preorder_T(p_L) < \preorder_T(p)$,  
	add a \textit{copied node} of $p_L$ as the rightmost child of $p$.  
	By Lemma~\ref{lem:steiner_left}, this occurs if and only if $p_L$ is an ancestor of $p$ in $T$.
	
	\item After creating the copied node of $p_L$,  
	for each node $c \in C(p_L)$, move the subtree $\T(c)$ to be a child subtree of the copied node of $p_L$ if $\preorder_T(p) < \preorder_T(c)$. If multiple subtrees are moved, their sibling order is preserved exactly as in $\T$.
\end{enumerate}

See Figure~\ref{figure:dummy} for an example.  
In the rest of this section, for each node $p \in \T^e$ that has a corresponding copied node,  
we say that $p$ is an \textit{original} node if it is not itself a copied node.

From the construction, $\T^e$ has at most $2n$ nodes,  
since each node in $\T$ appears at most twice in $\T^e$.  
Moreover, for any node $p$, the set of nodes on the paths $P_\T( r, p)$  
and $P_{\T^e}( r, p)$ are identical, except that some nodes in $P_\T( r, p)$ may be replaced by their copied versions in $\T^e$.  
Thus, the set of nodes in $P_{\T^e}( r, p)$ is also a Steiner-closed vertex set in $T$.  

We now introduce the following lemma,  
which enables us to compute $s_i$ and $l_i$ from $T$ and $\T^e$ efficiently.  
Note that this lemma does not generally hold for $\T$.

\begin{figure}[t]
	\begin{center}
		\includegraphics[scale=0.75]{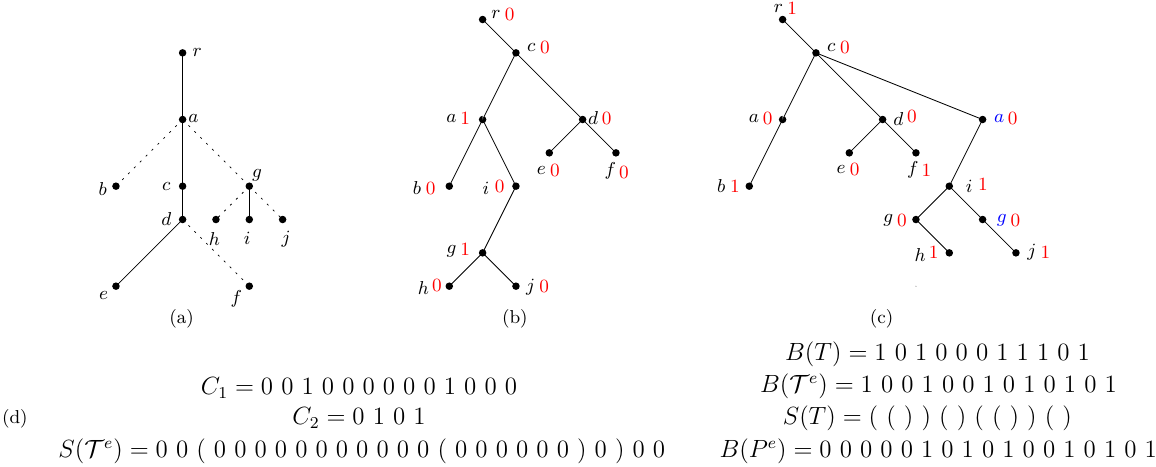}
	\end{center}
	\caption{(a) A tree $T$ (dotted lines indicate edges that are not part of any path in $P$), (b) a Steiner-closed search tree $\T$ on $T$, with red labels representing the labels of $\mathcal{L}$, (c) an extended search tree $\T^e$ (blue nodes indicate the copied nodes) of $\T$, with red labels representing the labels of $\mathcal{L}^e$, and (d) the sequences for supporting $\access{}$ queries on $\T$.}
	\label{figure:dummy}
\end{figure}

\begin{lemma}\label{lem:steiner_ds1}
	Let $b_i$ be the first traversed node in $P_i$ during the preorder traversal of $\T^e$. 
	If $s_i$ is the $j$-th node among $\{s_1, \dots, s_{\ell}\}$ in the preorder traversal of $T$, then $b_i$ is also the $j$-th node among $\{b_1, \dots, b_{\ell}\}$ in the preorder traversal of $\T^e$.
\end{lemma}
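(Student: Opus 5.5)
It suffices to show that for any two paths $P_i,P_{i'}$ with $\preorder_T(s_i)<\preorder_T(s_{i'})$, the node $b_i$ is visited before $b_{i'}$ in the preorder traversal of $\T^e$. The ranks then agree, since both orders are total orders on $\ell$ elements.

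\textbf{Step 1: structure of $b_i$.} By Lemma~\ref{lem:steiner-closed3}, $\T[V_i]$ is connected, so the first node of $V_i$ in preorder of $\T$ is the root of this connected piece. I will check that the construction of $\T^e$ keeps $\T^e[V_i]$ connected. A copied node of $p_L$ is placed only where $p_L$ is an ancestor of $p$ in $T$. The moved subtrees $\T(c)$ are exactly those hanging on the ``descendant side'' of $p$. Hence $b_i$ is the topmost node (original or copied) of $P_i$ in $\T^e$.

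\textbf{Step 2: case split.} Fix $i\ne i'$ with $\preorder_T(s_i)<\preorder_T(s_{i'})$. There are two cases.

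Case (a): $s_{i'}$ is not a descendant of any vertex of $P_i$ in $T$. Then $P_i$ and $P_{i'}$ lie in disjoint subtrees of $T$, separated at $z=\lca_T(s_i,s_{i'})$, with $P_i$ in a left child subtree. Consider $w=\lca_{\T^e}(b_i,b_{i'})$. This node separates the two paths in $T$, so the two paths sit in different child subtrees of $w$. Since the set of nodes on $P_{\T^e}(r,\cdot)$ is Steiner-closed, Proposition~\ref{prop:siblingorder} applies; together with the sibling ordering by minimum preorder number in $T$, it puts the subtree containing $P_i$ to the left.

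Case (b): $s_{i'}$ is a descendant in $T$ of some vertex $x\in P_i$. Here $\preorder_T(s_{i'})>\preorder_T(x)$, and the parent of $s_{i'}$ in $T$ lies on $P_i$. This is the case where $\T$ itself can fail. The node $b_{i'}$ may sit inside the leftmost child subtree of some $p$ whose leftmost child $p_L$ is a $T$-ancestor, that is, on the $P_i$ side, while $P_i$ continues further. The copy-and-move step moves every subtree $\T(c)$ with $\preorder_T(c)>\preorder_T(p)$ under the copy of $p_L$, which is the rightmost child of $p$. I will argue by Lemma~\ref{lem:steiner_left} and Steiner-closedness that $b_{i'}$ ends up after the first occurrence of $P_i$. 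Steiner-closedness forces the root-to-$b_{i'}$ path to contain the branching vertex on $P_i$ (the parent of $s_{i'}$) or a node of $P_i$ above it. Hence some node of $P_i$ is an ancestor of $b_{i'}$ in $\T^e$, or lies in a subtree to its left, so $b_i$ precedes $b_{i'}$.

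\textbf{Step 3: consistency with the decomposition.} I will also verify that the modification procedure does not break the argument, where $h_i$ is assigned to whichever side is visited first. This should follow because $s_i$ is, in either case, the top vertex of its path.

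The main obstacle is case (b). There I must track precisely how the move operation relocates subtrees whose $T$-preorder exceeds that of $p$. The key claim to establish is this: in $\T^e$, every child subtree of $p$ to the left of the copied node contains only vertices with $T$-preorder smaller than $p$'s, or vertices in $p$'s own path. I would first try to prove this claim by induction over the preorder of $\T$, using Lemma~\ref{lem:steiner_left} at each step.
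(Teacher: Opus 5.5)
\paragraph{Verdict.} Your case split (whether $s_i$ is a $T$-ancestor of $s_{i'}$) is the paper's, but you put the difficulty in the wrong case. Your argument for case (a) does not work: it never uses the copy-and-move step, so it would equally ``prove'' the lemma for $\T$, where the lemma is false.

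\paragraph{Counterexample to your case (a) reasoning.} Let $T$ have edges $r$--$q$, $q$--$p$, $q$--$y$, $p$--$p_1$, $p$--$p_2$, with children in the listed order. Let $\T$ be the Steiner-closed STT with root $r$, whose child $p$ has children $q$ (itself with child $y$), $p_1$, $p_2$.
\begin{itemize}
\item The decomposition gives $P_1=\{r,q,p,p_1\}$, $P_2=\{p_2\}$, $P_3=\{y\}$.
\item The preorder of $\T$ meets $b_1,b_3,b_2$. Yet $s_2=p_2$ precedes $s_3=y$ in $T$ and neither is an ancestor of the other, so this is your case (a).
\item Here $w=\lca_{\T}(b_2,b_3)=p$, and the child subtree of $w$ containing $P_3$ is rooted at $w$'s leftmost child $q$, a $T$-ancestor of $w$ (Lemma~\ref{lem:steiner_left}).
\item That subtree's minimum $T$-preorder is smaller than anything in the subtree containing $P_2$. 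So the minimum-preorder sibling rule puts $P_{i'}$ first, not $P_i$.
\end{itemize}
In $\T^e$ the order comes out right only because $y$ is moved under the copy of $q$, which is made the rightmost child by construction. Proposition~\ref{prop:siblingorder} cannot be invoked in $\T^e$: the copy carries a small $T$-preorder label yet is the last sibling.

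\paragraph{The missing idea (paper's second case).} Put $u=\lca_T(s_\alpha,s_\beta)$. Take the minimum-depth $v$ on $P_T(u,s_\alpha)$ whose leftmost $\T$-child $v_L$ is a $T$-ancestor of $v$. Steiner-closedness then places $b_\beta$ in the subtree of the copy of $v_L$, which is the rightmost child of $v$. Hence $b_\beta$ is visited after $b_\alpha$.

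\paragraph{The ancestor case.} Your case (b) needs no copies. The paper argues that if $b_\beta$ preceded $b_\alpha$, Steiner-closedness would force all nodes of $P_\T(b_\beta,b_\alpha)$ onto a single $T$-path. The initial decomposition would then have put $b_\alpha$ and $b_\beta$ in the same path, a contradiction.

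Your sketch of (b) also relies on false intermediate claims:
\begin{itemize}
\item The $T$-parent of $s_{i'}$ need not lie on $P_i$; it can lie on an intermediate path. Transitivity would repair this.
\item Your proposed key claim fails as stated. In the example, $p_2$ is a child of $p$ to the left of the copy. Yet its $T$-preorder exceeds that of $p$, and it lies in a different path from $p$.
\end{itemize}
Finally, Steps~1 and~3 are announced rather than proved.
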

\begin{proof}
	From the definition of $\T^e$, $b_i$ is always an original node. 
	Suppose there exist $\alpha, \beta \in [\ell]$ such that $\preorder_T(s_{\alpha}) < \preorder_T(s_{\beta})$ but $\preorder_{\T^e}(b_{\alpha}) > \preorder_{\T^e}(b_{\beta})$. 
	We consider two cases: (1) $s_{\alpha}$ is an ancestor of $s_{\beta}$, and (2) $s_{\alpha}$ is not an ancestor of $s_{\beta}$ in $T$.
	
	In case (1), by Lemma~\ref{lem:steiner_left}, $b_{\alpha}$ lies on the leftmost path in $\T^e(b_{\beta})$. 
	We claim that all the nodes in $P_\T( b_{\beta}, b_{\alpha})$ lie on a single path in $T$. 
	Assume the opposite and let $b'$ be the node in $P_\T( b_{\beta}, b_{\alpha})$ with the minimum depth where the nodes in the path $P_\T( b_{\beta}, b')$ are not on a single path in $T$. 
	Since $P_\T( b_{\beta}, b_{\alpha})$ does not contain $\lca_T( b', b_{\alpha})$ (otherwise, $b'$ and $b_{\alpha}$ would not be on a single path in $\T$), the set of nodes in $P_\T( b_{\beta}, b_{\alpha})$ is not a Steiner-closed set, which is a contradiction. 
	Since the claim holds, from the construction of $\T$ in Section~\ref{sec:steiner}, $P_{\beta}$ would contain both $b_{\alpha}$ and $b_{\beta}$, leading to a contradiction.
	
	In case (2), let $u = \lca_T( s_{\alpha}, s_{\beta})$. 
	Then there exists a vertex $v \in P_T( u, s_{\alpha})$ where the leftmost child node $v_L$ of $v$ in $\T$ is an ancestor of $v$ in $T$. 
	If multiple such vertices exist, choose the one with the minimum depth. 
	Then $v$ is a proper descendant of $u$ in $T$; otherwise, $\preorder_{\T^e}(b_{\alpha}) < \preorder_{\T^e}(b_{\beta})$ from the construction of $\T$, which is a contradiction. 
	Moreover, $P_\T( r, v)$ does not contain $b_{\alpha}$; otherwise, the set of nodes in this path is not a Steiner-closed set since it would not contain $u$. 
	Thus, $v_L$ has a copied node in $\T^e$, and $b_{\beta}$ is in the rooted subtree of $\T^e$ at this copied node $v_L$. 
	This implies $\preorder_{\T^e}(b_{\alpha}) < \preorder_{\T^e}(b_{\beta})$, which is a contradiction.
\end{proof}

\begin{example}
	In the example of Figure~\ref{figure:dummy}, $s_1 = b$, $s_2 = r$, $s_3 = f$, $s_4 = h$, $s_5 = g$, and $s_6 = j$. Also, $b_1 = b$, $b_2 = r$, $b_3 = f$, $b_4 = h$, $b_5 = i$, and $b_6 = j$. During the preorder traversal of $T$, the nodes are visited in the order $s_2 \rightarrow s_1 \rightarrow s_3 \rightarrow s_5 \rightarrow s_4 \rightarrow s_6$. Also during the preorder traversal of $\T^e$, the nodes are visited in the order $b_2 \rightarrow b_1 \rightarrow b_3 \rightarrow b_5 \rightarrow b_4 \rightarrow b_6$. However, during the preorder traversal of $\T$, the nodes are visited in the order $b_2 \rightarrow b_1 \rightarrow b_5 \rightarrow b_4 \rightarrow b_6 \rightarrow b_3$.
\end{example}

Next, we describe an $O(n)$-bit data structure to compute $\preorder_{\T^e}(p)$ in $O(1)$ time when $\preorder_{\T}(p)$ is given (here, $p$ is an original node).  
Let $\mathcal{L}$ be a labeled tree obtained from $\T$ by replacing each node with a label of either $0$ or $1$ as follows:  
a node in $\mathcal{L}$ is labeled $1$ if and only if it has a copied node in $\T^e$.  
We store $\mathcal{L}$ in $O(n)$ bits to support tree navigational queries as in \cite{DBLP:journals/talg/NavarroS14},  
as well as labeled ancestor queries (i.e., finding the nearest ancestor of a given node with a specified label)  
in $O(1)$ time~\cite{DBLP:journals/algorithmica/HeMZ14}.  

Similarly, we define $\mathcal{L}^e$ as a labeled tree obtained from $\T^e$ by replacing each node with a label of either $0$ or $1$ as follows:  
a node in $\mathcal{L}^e$ is labeled $1$ if and only if it corresponds to a node in the set $\{b_1, \dots, b_{\ell}\}$,  where each $b_i$ is defined as in Lemma~\ref{lem:steiner_ds1}.  
We store $\mathcal{L}^e$ in $O(n)$ bits using the same data structures as for $\mathcal{L}$.  

In addition, we store the following auxiliary structures:

\begin{itemize}
	\item Let $C_1$ be a bit string of size $n$, where $C_1[i] = 1$ if and only if the node $u \in \T$ with $\preorder_{\T}(u) = i$ has a copied node in $\T^e$. 
	Now Suppose the node in $\T$ that corresponds to the $j$-th $1$ in $C_1$ has $d_j$ children where the parent node of them in $\T^e$ is an original node. From the construction of $\T^e$, these children are $d_j$ leftmost siblings in $\T$.
	We then define $C_2$ as a bit string of size at most $2n$, as $0^{d_1}10^{d_2}1 \dots 0^{d_t}1$ where $t$ is the total number of ones in $C_1$. We store $C_1$ and $C_2$ using the data structure of Lemma~\ref{lem:rankselect} using $O(n)$ bits to support rank and select queries on them in $O(1)$ time.
	
	\item Let $S(\T^e)$ be a sequence constructed from the BP of $\T^e$ where all the parenthesis are replaced by zero except (i) open parentheses corresponding to original nodes that have the copied ones, and (ii) closed parentheses corresponding to copied nodes.
	From the definition of $\T^e$, the subsequence of $S(\T^e)$ is a obtained by removing all zeros is balanced. Moreover, for any `(' in $S(\T^e)$ corresponding to an original node, its matching `)' corresponds to its copied node. 
	We maintain the $O(n)$-bit data structures of Lemma~\ref{lem:rankselect} and \ref{lem:bp} on $S(\T^e)$ to support rank, select, findopen, and findclose in $O(1)$ time. 
\end{itemize}

Using the above structures, we can compute $\preorder_{\T^e}(p)$ in $O(1)$ time when $\preorder_{\T}(p)$ is given, as follows:

\begin{enumerate}
	\item Find the nearest proper ancestor $p_1$ of $p$ in $\mathcal{L}$  
	with label $1$ using a labeled ancestor query.  
	If no such ancestor exists, then $\preorder_{\T}(p) = \preorder_{\T^e}(p)$.  
	Otherwise, find $p_2$, the child of $p_1$ that is an ancestor of $p$ in $\mathcal{L}$.  
	Both $p_1$ and $p_2$ can be located in $O(1)$ time using the tree navigational queries on $\mathcal{L}$.
	
	\item Using rank and select queries on $C_1$ and $C_2$,  
	together with tree navigational queries (degree and child-rank queries in Table~1 of \cite{DBLP:journals/talg/NavarroS14}) on $\mathcal{L}$,  
	we check whether $p_2$ is a child of the original or copied $p_1$ in $\T^e$ in $O(1)$ time.  
	If $p_2$ is a child of the original (or copied) $p_1$, we compute the preorder number of the original (or copied) $p_1$ in $\T^e$ in $O(1)$ time by:  
	(i) locating the corresponding open (or close) parenthesis in $S(\T^e)$ using the select operation on $C_1$ and the select and findclose operations on $S(\T^e)$, and  
	(ii) computing $\preorder_{\T^e}(p_1)$ using the data structure of \cite{DBLP:journals/talg/NavarroS14} on $\mathcal{L}^e$.  Using the preorder number of the original (or copied) $p_1$ in $\T^e$, we compute $\preorder_{\T^e}(p_2)$ using a child query (see Table~1 of \cite{DBLP:journals/talg/NavarroS14}) on $\mathcal{L}^e$.
	
	\item Let $\delta_p = \preorder_{\T}(p) - \preorder_{\T}(p_2)$.  
	Then $\preorder_{\T^e}(p) = \preorder_{\T^e}(p_2) + \delta_p$.
\end{enumerate}

\subparagraph*{Find the vertex $s_i$ in $T$. } 
Let $B(T)$ and $B(\T^e)$ be two bit strings of size $O(n)$ defined as follows:  
(i) $B(T)[i] = 1$ if and only if the vertex whose preorder number is $i$ in $T$ is in $\{s_1, \dots, s_{\ell}\}$, and (ii) $B(\T^e)[i] = 1$ if and only if the node whose preorder number is $i$ in $\T^e$ is in $\{b_1, \dots, b_{\ell}\}$.  We maintain both bit strings using the data structure of Lemma~\ref{lem:rankselect}, which requires $O(n)$ bits.

Then, we can find the vertex $s_i$ such that the path $P_i = P_T( s_i, l_i)$ contains the vertex $u_k$ in $O(1)$ time using rank and select on the bit strings together with tree navigational queries on  $\mathcal{L}^e$ as follows:  
(1) find the original node $u_k$ in $\T^e$ using its preorder number $k$ in $\T$;  
(2) find the node $b_i$ in $\T^e$, which is the nearest labeled ancestor of $u_k$ with label $1$ in $\mathcal{L}^e$, by Lemma~\ref{lem:steiner-closed3};  
(3) if $b_i$ corresponds to the $j$-th $1$ in $B(\T^e)$, then $\preorder_T(s_i)$ is the position of the $j$-th $1$ in $B(T)$ by Lemma~\ref{lem:steiner_ds1}.

\subparagraph*{Find the vertex $l_i$ in $T$. } 
Define a set $E = \{s_1, \dots, s_{\ell}\} \cup \{l_1, \dots, l_{\ell}\}$.  
We define a parenthesis sequence $S(T)$ of size $2\ell$ such that  
$S(T)[i]$ is `(' (resp., `)') if the $i$-th vertex in $E$, according to the preorder traversal of $T$,  
is in $\{s_1, \dots, s_{\ell}\}$ (resp., $\{l_1, \dots, l_{\ell}\}$).  
By Lemmas~\ref{lem:steiner-closed1} and \ref{lem:steiner-closed3}, $S(T)$ is balanced, and for each open parenthesis in $S(T)$ corresponding to $s_i$, its matching closed parenthesis corresponds to $l_i$.  

We store $S(T)$ using the data structures of Lemma~\ref{lem:rankselect} and Lemma~\ref{lem:bp},  
requiring $O(n)$ bits in total.  
Since we can compute the position in $S(T)$ corresponding to $s_i$ in $O(1)$ time using a rank operation on $B(T)$,  we can compute $\preorder_T(l_i)$ in $T$ in $O(1)$ time using findclose and rank operations on $S(T)$ together with the tree navigational query (\texttt{leaf\_select} query in Table~1 of \cite{DBLP:journals/talg/NavarroS14}) on $T$.

\subparagraph*{Computing $\inorder_{\T_i}(u_k)$. } 
We store the tree structures of $\T_1, \dots, \T_{\ell}$  
using the representation of Lemma~\ref{lem:path}, along with $o(n)$-bit auxiliary structures  
to support tree navigational queries on them~\cite{DBLP:journals/mics/DavoodiRS17}, which requires $O(n)$ bits in total. Thus, it suffices to show how to compute $\preorder_{\T_i}(u_k)$ in $O(1)$ time,  
since $\inorder_{\T_i}(u_k)$ can then be obtained from $\preorder_{\T_i}(u_k)$ in $O(1)$ time using the data structure of \cite{DBLP:journals/mics/DavoodiRS17}.
To compute $\preorder_{\T_i}(u_k)$,  
let $\Delta_{\T^e} = \preorder_{\T^e}(u_k) - \preorder_{\T^e}(b_i)$  
and $\Delta_{\T_i} = \preorder_{\T_i}(u_k) - \preorder_{\T_i}(b_i) = \preorder_{\T_i}(u_k) - 1$.  
Since some nodes in $\T^e(b_i)$ do not belong to $\T_i$,  
we generally have $\Delta_{\T^e} \ge \Delta_{\T_i}$.  
Because $\Delta_{\T^e}$ can be computed in $O(1)$ time as described earlier,  
we focus on how to compute $\delta = \Delta_{\T^e} - \Delta_{\T_i}$.  

Let $B(P^e)$ be a bit string of size $O(n)$,  
which is a unary encoding of the sizes of the trees $\T_1, \dots, \T_{\ell}$,  
ordered according to the preorder numbers of $b_1, \dots, b_{\ell}$ in $\T^e$.  
We store $B(P^e)$ using the data structure of Lemma~\ref{lem:rankselect},  
requiring $O(n)$ bits.  

To compute $\delta$, let $F$ be the set of nodes in $\{b_1, \dots, b_{\ell}\}$  
such that for any $b_j \in F$, $\preorder_{\T^e}(b_i) < \preorder_{\T^e}(b_j) < \preorder_{\T^e}(u_k)$. The size of $F$ can be computed in $O(1)$ time using rank and select queries on $B(\T^e)$. Then, by Lemmas~\ref{lem:steiner-closed1} and \ref{lem:steiner-closed3},  
we can compute $\delta$ by summing:  
(i) the sizes of the subtrees whose roots are in $F$, and  
(ii) the number of copied nodes in $\T^e$ that are traversed after $b_i$ and before $p$  
in the preorder traversal of $\T^e$.  
Both (i) and (ii) can be computed in $O(1)$ time using rank and select queries on $B(P^e)$ and $C_1$.  
\\\\
We summarize the result of the data structure in this section as the following theorem.

\begin{theorem}\label{thm:ds_steiner}
	Given a Steiner-closed search tree $\T$ on a tree $T$ with $n$ vertices,  
	there exists a data structure of size $O(n)$ bits that supports $\access$ query on $\T$ in $O(1)$ time.
\end{theorem}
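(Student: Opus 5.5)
The proof of Theorem~\ref{thm:ds_steiner} collects the structures described above and checks two things: that each uses $O(n)$ bits, and that the query procedure for $\access(k)$ makes $O(1)$ calls to constant-time primitives.

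\textbf{Space.} The fixed structures are:
\begin{itemize}
\item the labeled trees $\mathcal{L}$ and $\mathcal{L}^e$, each of size at most $2n$, which hold the shapes of $\T$ and $\T^e$ together with navigation and labeled-ancestor support~\cite{DBLP:journals/talg/NavarroS14,DBLP:journals/algorithmica/HeMZ14};
\item the bit strings $C_1$, $C_2$, $B(T)$, $B(\T^e)$ and $B(P^e)$;
\item the sequences $S(\T^e)$ and $S(T)$, with the structures of Lemmas~\ref{lem:rankselect} and~\ref{lem:bp};
\item the shapes of $\T_1,\dots,\T_\ell$ with inorder-rank support~\cite{DBLP:journals/mics/DavoodiRS17}.
\end{itemize}
Each of these has length $O(n)$ over a constant-size alphabet, so Lemma~\ref{lem:rankselect} yields $O(n)$ bits with $O(1)$-time rank, select and access. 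The size of $\T^e$ is bounded by $2n$ because each node of $\T$ is copied at most once.

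\textbf{Query.} Given $k$, I would run the following pipeline.
\begin{enumerate}
\item Compute $\preorder_{\T^e}(u_k)$ with the three-step procedure using $\mathcal{L}$, $C_1$, $C_2$ and $S(\T^e)$.
\item Find $b_i$ as the nearest ancestor of $u_k$ labeled $1$ in $\mathcal{L}^e$.
\item Map $b_i$ to $s_i$ by rank on $B(\T^e)$ followed by select on $B(T)$.
\item Obtain $l_i$ by findclose on $S(T)$ and then \texttt{leaf\_select}.
\item Compute $\preorder_{\T_i}(u_k)=\Delta_{\T^e}-\delta+1$ using $B(\T^e)$, $B(P^e)$ and $C_1$.
\item Convert this preorder number to $\inorder_{\T_i}(u_k)$.
\item Take the level ancestor of $l_i$ at depth $\depth_T(s_i)+\inorder_{\T_i}(u_k)-1$.
\end{enumerate}
Each step is a constant number of $O(1)$ operations, so $\access(k)$ runs in $O(1)$ time.

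\textbf{Correctness.} Most steps follow directly from earlier results, but three need explicit justification.
\begin{itemize}
\item \emph{Step 2.} The nearest labeled ancestor is really the $b_i$ of the path containing $u_k$. By Lemma~\ref{lem:steiner-closed3}, $\T[V_i]$ is connected, and the construction of $\T^e$ preserves root-to-node vertex sets. Hence the first-visited node of $P_i$ in $\T^e$ is an ancestor of every node of $V_i$. Moreover, no $b_j$ with $j\neq i$ lies between $b_i$ and $u_k$, since such a $b_j$ would split $\T[V_i]$.
\item \emph{Step 3.} This is exactly the order-preservation property of Lemma~\ref{lem:steiner_ds1}.
\item \emph{Step 4.} $S(T)$ is balanced with matching pairs $(s_i,l_i)$. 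This holds because each $P_i=P_T(s_i,l_i)$ is a vertical path, and these paths partition $V$, so the intervals $[\preorder_T(s_i),\preorder_T(l_i)]$ are nested or disjoint.
\end{itemize}

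\textbf{Main obstacle.} The hardest part is the count $\delta$ in step 5. It must equal the number of nodes of $\T^e$ that are visited strictly between $b_i$ and $u_k$ and do not belong to $\T_i$. I would argue that every such node is one of two kinds:
\begin{enumerate}
\item a node of some other path $P_j$ whose $b_j$ lies in that preorder interval, in which case, by connectivity (Lemma~\ref{lem:steiner-closed3}), the entire $\T_j$ lies in the interval and is counted via $B(P^e)$; or
\item a copied node, which is counted through rank on the positions corresponding to copies.
\end{enumerate}
The delicate point is to rule out a tree $\T_j$ that is only partially inside the interval. I would handle this with the Steiner-closed argument used in Lemma~\ref{lem:steiner_ds1}: a partially contained $\T_j$ would force a root-to-node set in $\T^e$ to omit a branching vertex, contradicting the Steiner-closed property.
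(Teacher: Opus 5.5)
Your proposal follows the paper's argument essentially step for step: the same extended tree $\T^e$, the same auxiliary structures ($\mathcal{L}$, $\mathcal{L}^e$, $C_1$, $C_2$, $S(\T^e)$, $B(T)$, $B(\T^e)$, $S(T)$, $B(P^e)$) and the same query pipeline, and your laminar-interval argument that $S(T)$ is balanced is more direct than the paper's appeal to Lemmas~\ref{lem:steiner-closed1} and~\ref{lem:steiner-closed3}. The one point that both you and the paper leave thin is Step~2, together with how it feeds into your $\delta$ computation: preservation of root-to-node label sets does not by itself rule out nodes of $V_i$ being re-hung under a \emph{copy} of $b_i$, so you must check that a copied $p_L$ always lies on the same path as its $\T$-parent $p$ (this follows from the decomposition rule and Steiner-closedness); once every $b_j$ is known to be a $\T^e$-ancestor of all of $V_j$, partial containment of a $\T_j$ in the preorder interval is excluded by connectivity of $\T[V_j]$ and contiguity of preorder subtree intervals, so no separate Steiner-closed argument is needed.
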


\end{document}